\newcommand{\Dc}{\mathcal{D}}
\newcommand{\Eb}{\mathbb{E}}
\newcommand{\e}{\mathbf{e}}
\newcommand{\I}{\mathbf{I}}
\newcommand{\Ic}{\mathcal{I}}
\newcommand{\Mc}{\mathcal{M}}
\newcommand{\Nc}{\mathcal{N}}
\newcommand{\n}{\mathbf{n}}
\newcommand{\w}{\mathbf{w}}
\newcommand{\x}{\mathbf{x}}
\newcommand{\y}{\mathbf{y}}
\newcommand{\z}{\mathbf{z}}
\newcommand{\mb}[1]{\mathbb{#1}}
\newtheorem{thm}{Theorem}
\newtheorem{lem}{Lemma}
\newtheorem{defn}{Definition}
\newtheorem{assum}{Assumption}
\newcommand{\com}[1]{{\color{red}\textbf{Comment}: #1}}%comment of the text
\newcommand{\comtoo}[1]{{\color{purple}\textbf{Comment}: #1}}%comment of the text
\newcommand{\resp}[1]{{\color{cyan}\textbf{Response}: #1}} %responses to comments of the text
\newcommand{\com}[1]{}
\newcommand{\comtoo}[1]{}
\newcommand{\resp}[1]{}
\def\BibTeX{{\rm B\kern-.05em{\sc i\kern-.025em b}\kern-.08em
    T\kern-.1667em\lower.7ex\hbox{E}\kern-.125emX}}
\begin{document}
\pagenumbering{arabic}

%\title{Differential Private Over-the-Air Federated Learning via Decentralized Adaptive Power Control
%\title{A Cross-layer framework for Wireless (Over-the-Air) Differentially Private in Federated Learning
\title{Providing Differential Privacy for Federated Learning Over Wireless: A Cross-layer Framework
\thanks{This work is supported in part by NSF CNS-2112471.}
\thanks{This paper was presented in part at IEEE International Conference on Communications (ICC), June 2024~\cite{mao2024privacy}.}
\thanks{J. Mao and A. Yener are with the Department of Electrical and Computer Engineering, The Ohio State University, Columbus, OH 43210 USA (e-mail: mao.518@osu.edu; yener@ece.osu.edu).}
\thanks{T. Yin and M. Liu are with the Dept. of Electrical Engineering and Computer Science, University of Michigan, Ann Arbor, MI 48109 USA (e-mail: tyin@umich.edu; mingyan@umich.edu).}}

\author{Jiayu Mao, {\it Student Member, IEEE}, Tongxin Yin, Aylin Yener, {\it Fellow, IEEE}, and Mingyan Liu, {\it Fellow, IEEE}
}

\newgeometry{left=0.625in, right=0.625in,top=0.75in,bottom=1in}
\maketitle

\begin{abstract}
Federated Learning (FL) is a distributed machine learning framework that inherently allows edge devices to maintain their local training data, thus providing some level of privacy. 
However, FL's model updates still pose a risk of privacy leakage, which must be mitigated. 
Over-the-air FL (OTA-FL) is an adapted FL design for wireless edge networks that leverages the natural superposition property of the wireless medium. 
We propose a wireless physical layer (PHY) design for OTA-FL which improves differential privacy (DP) through a fully decentralized, dynamic power control strategy that utilizes both inherent Gaussian noise in the wireless channel and a cooperative jammer (CJ) for additional artificial noise generation when higher privacy levels are required.
Although primarily implemented within the Upcycled-FL framework, where a resource-efficient method with first-order approximations is used at every even iteration to decrease the required information from clients, our power control strategy is applicable to any FL framework, including FedAvg and FedProx as shown in the paper. 
This adaptation showcases the flexibility and effectiveness of our design across different learning algorithms while maintaining a strong emphasis on privacy.
Our design removes the need for client-side artificial noise injection for DP, utilizing a cooperative jammer to enhance privacy without affecting transmission efficiency for higher privacy demands.
Privacy analysis is provided using the Moments Accountant method.
We perform a convergence analysis for non-convex objectives to tackle heterogeneous data distributions, highlighting the inherent trade-offs between privacy and accuracy. 
Numerical results confirm that our approach with different FL algorithms outperforms the state-of-the-art under the same DP conditions on the non-i.i.d. dataset FEMNIST. 
The results also demonstrate the effectiveness of cooperative jammer in meeting stringent privacy demands.
\end{abstract}
\begin{IEEEkeywords}
 Over-the-Air Federated Learning, Differential Privacy, Cooperative Jamming, Channel Noise, Artificial Noise, Power Control.
\end{IEEEkeywords}

% \vspace{-0.2in}
\section{Introduction}
\label{sec:intro}
% \vspace{-0.1in}
Federated Learning (FL)~\cite{mcmahan2017communication} offers a decentralized approach to machine learning where multiple clients, each with its own private dataset, collaboratively train a global model without data centralization. 
Facilitated by a parameter server (PS), FL involves iterative updates where clients train models locally and then transmit these models to the PS. 
The PS aggregates these individual model updates to generate a new global model.
Although promising, implementation of FL in mobile edge networks requires careful orchestration of resources.

\subsection{Related Work}
Over-the-air federated learning (OTA-FL)~\cite{amiri2020machine}, tailors federated learning to wireless networks by exploiting the superposition property of wireless channels. 
This approach enables clients to simultaneously transmit their local model updates over the air, allowing the PS to directly receive the corresponding update aggregation. 
Utilizing the superposition property, OTA-FL supports simultaneous client transmissions, significantly reducing communication delays and enhancing the scalability of federated learning applications, for example, removing the need for client selection in some cases.
While OTA-FL offers efficiency and data privacy by processing data locally, it still poses potential privacy risks. 
The transmitted analog signals containing model updates could inadvertently reveal sensitive information about the local data. Additionally, the iterative nature of the learning process can amplify this risk, as repeated data usage may lead to increased cumulative information exposure. 
This necessitates the implementation of enhanced security measures to safeguard against privacy leakage throughout the training process.

Differential privacy (DP)~\cite{dwork2014algorithmic} has emerged as a popular method for ensuring individual privacy and serves as a standard privacy guarantee of distributed learning algorithms.
To obtain DP, a natural approach is to add a degree of uncertainty—typically, random Gaussian noise—into released statistics, such as trained models, thereby obscuring the influence of any single data point.
References~\cite{seif2020wireless,sonee2021wireless,park2023differential} incorporate artificial noise into the FL local model updates before transmission to preserve privacy.
Similarly,~\cite{liao2022over} adds correlated perturbations on transmitted signals.
In~\cite{hasirciouglu2021private}, anonymity of over-the-air computation is utilized to reduce the amount of artificial noise.
In~\cite{mohamed2021privacy}, in addition to additive noise, user sampling is used to enhance privacy guarantee. 
In~\cite{yan2022private}, a misaligned power allocation is proposed to minimize the optimality gap of DP OTA-FL.
In~\cite{jiang2022optimized}, power allocation for local updates and injected noise is jointly optimized.
There, a privacy-accuracy trade-off arises as introducing more noise enhances privacy but potentially degrades the model's accuracy.

OTA-FL can potentially achieve a certain level of DP at no additional cost by leveraging inherent channel noise through appropriate power control design.
Instead of static power allocation,~\cite{liu2020privacy} proposes an adaptive power control approach whereby privacy can be obtained ``for free'' when a relatively large  privacy budget is allowed.
An energy-efficient method is developed to reduce the transmit power~\cite{koda2020differentially}.
Reference~\cite{zhang2023communication} includes a compression technique for updates when utilizing channel noise for intrinsic privacy.
Reference~\cite{yan2023over} studies a device scheduling scheme for differential private OTA-FL.
Reference~\cite{wei2023differentially} proposes an orthogonal transmission method to fully utilize transmit power.

% Differential privacy (DP)~\cite{dwork2014algorithmic} is a mathematical concept widely used for quantifying privacy loss and measuring privacy guarantee of an algorithm.
% To provide privacy, a natural approach is to add random Gaussian noise to released statistics or models. Prior studies include integrating artificial noise in local model updates~\cite{seif2020wireless,sonee2021wireless,park2023differential} for privacy preservation, introducing correlated perturbations~\cite{liao2022over} in transmissions, and using anonymity in over-the-air computation to reduce noise needs~\cite{hasirciouglu2021private}. 
% Besides additive noise, \cite{mohamed2021privacy} employs user sampling, \cite{yan2022private} uses misaligned power allocation and~\cite{jiang2022optimized} jointly optimizes power allocation for local updates and injected noise.
% However, this often leads to a privacy-accuracy trade-off, as more noise can reduce model accuracy. 
% OTA-FL, on the other hand, leverages inherent channel noise for DP, with studies exploring energy-efficient methods like reduced transmit power~\cite{koda2020differentially}, adaptive power control~\cite{liu2020privacy}, and device scheduling~\cite{yan2023over}. 
% Techniques such as compression for update transmissions~\cite{zhang2023communication} and orthogonal sequences design~\cite{wei2023differentially} are also proposed for efficient use of power and enhanced privacy.
%\vspace{-0.5in}
\subsection{Contributions}
In this work, we propose to leverage the wireless medium to achieve any $(\varepsilon, \delta)$-differential privacy requirement.
We introduce a fully decentralized dynamic power control, which relies on real-time local data from each client in each global round. 
To enhance its efficacy, we integrate an adaptive power control strategy that guarantees differential privacy while ensuring robust performance in learning.
Implemented by each client in each iteration, this strategy can work with any federated learning framework.
To focus our efforts, we first build on our previous work Upcycled-FL~\cite{yin2023federated} which updates local models only during odd iterations and uses first-order approximations in even iterations to reduce potential information leakage.
In addition, we apply our proposed power control design to FedAvg~\cite{mcmahan2017communication} and FedProx~\cite{li2020federated}, two of the most widely used FL frameworks, to verify its adaptability across various FL frameworks.
We further strengthen our system's privacy by introducing a cooperative jammer (CJ)~\cite{tekin2007gaussian,tekin2008general}, initially developed for PHY security in multiple access wiretap channels in our previous work, for providing confidentiality. 
We employ the CJ to generate additional artificial noise, improving privacy without reducing clients' transmission efficiency, we would activate the CJ only when the inherent channel noise is insufficient for the privacy requirements.
The CJ can be a dedicated node in the network or an external edge device who simply signs up to aid in privacy preserving FL in exchange access to the global model.
We assume that the clients have perfect channel state information (CSI).
Instead of the standard DP analysis, we employ the Moments Accountant~\cite{abadi2016deep} method for a more accurate and stringent analysis of $(\varepsilon, \delta)$-DP requirement. 
We also develop an artificial noise design for scenarios demanding stringent privacy.
More general than the standard convex functions and i.i.d. data assumptions in the literature, we conduct our convergence analysis for non-convex objectives and heterogeneous data distributions, and highlight the inherent trade-offs between privacy and accuracy.
Our numerical evaluations reveal that our power control strategy, when applied to Upcycled-FL, FedAvg, and FedProx, outperforms the current state-of-the-art~\cite{liu2020privacy} with the same DP guarantee, even in the absence of a CJ.
Additionally, our results demonstrate the efficacy of our proposed artificial noise design in meeting high privacy requirements on the non-i.i.d. dataset FEMNIST.

\subsection{Organization}
The remainder of the paper is organized as follows. 
We describe the system model under consideration in Section~\ref{sec: sysmodel}.
We introduce the decentralized adaptive power control design in Section~\ref{subsec:pcdesign}, followed by its privacy analysis and cooperative jammer noise design in Section~\ref{subsec:privacytheorem}.
We provide the convergence analysis in Section~\ref{sec: conv}. We present numerical results in Section~\ref{sec: exp}. Section~\ref{sec: conclusion} concludes the paper.
Proofs are provided in Appendices A-C.

% \vspace{-0.05in}
\section{System Model} 
\label{sec: sysmodel}
% \vspace{-0.1in}
\subsection{Federated Learning Model} 
\label{subsec: fl}
% \vspace{-0.02in}
We consider a federated learning system that consists of a parameter server (PS) and various clients in a set $\Ic$.
Each client has a unique local dataset $\Dc_i$.
Datasets are non i.i.d.
The local loss function of the $i$-th client is defined as
\begin{equation}
    F_i(\w; \Dc_i) \triangleq \frac{1}{| \Dc_i |} \sum_{\xi^i_j \in \Dc_i} F(\w, \xi^i_j)
\end{equation}
where $\w \in \mathbb{R}^d$ is the model parameter, $\xi^i_j$ the $j$-th data sample from local dataset $\Dc_i$, and $F(\w, \xi^i_j)$ is the sample-wise loss function.
The goal of FL is to minimize the global loss function:
\begin{equation}
    \min_{\w \in \mathbb{R}^d}f(\w) \triangleq \min_{\w\in\mathbb{R}^d} \sum_{i \in \Ic} p_i F_i(\w; \Dc_i) = \Eb [F_i(\w; \Dc_i)], 
    \label{eq: objective}
\end{equation}
where $p_i = \frac{|\Dc_i|}{\sum_{i \in \Ic} |\Dc_i|}$ represents the proportion of data samples originating from client $i$, $\Eb[\cdot]$ is the expectation taken across all clients.
In this work, we will assume general non-convex objectives, i.e., local loss functions $F_i(\w, \Dc_i)$ are non-convex, reflecting real-world scenarios.
Additionally, the volume of training data differs across clients, meaning $p_i \ne p_j$ for $i\ne j$.

In FL, the training of the model is completed through a series of iterative steps.
During each iteration, clients execute local computations to update their model parameters based on individual datasets.
Following this, the local updates are sent to the PS.
Upon receiving these updates, the PS aggregates and combines them into a new global model by a weighted average.
Notably, in OTA-FL, the superposition characteristic of the wireless channel is exploited, allowing simultaneous transmission and aggregation of updates as clients transmit concurrently.
After computing the global update, the PS broadcasts it to all clients, model starting the next iteration of training.
This iterative process continues until convergence of the global model is achieved.

In this paper, we consider three FL algorithms: two established baseline frameworks, FedAvg~\cite{mcmahan2017communication} and FedProx~\cite{li2020federated}, and the third is Upcycled-FL, which is an FL framework that minimizes information leakage and computational requirements, as elaborated in our prior work~\cite{yin2023federated}. 
We will describe the implementation of these algorithms, focusing specifically on their local and global update mechanisms.
We consider $M$ total iterations for FedAvg and FedProx, and $2M$ total iterations for Upcycled-FL, for a fair comparison.

\subsubsection{Local update}

For FedAvg, local objectives are just local loss functions~\cite{mcmahan2017communication}; while for FedProx and Upcycled-FL, the local objectives are loss functions with proximal terms~\cite{li2020federated}:
\begin{equation}
\min_{\w \in \mathbb{R}^d} g_i(\w; \Bar{\w}) \triangleq F_i(\w; \Dc_i) + \frac{\mu}{2} \|\w - \Bar{\w}\|^2,
\label{eq: objective_local}
\end{equation}
where $\Bar{\w}$ denotes the global model. FedAvg and FedProx perform local updates at every global iteration, whereas Upcycled-FL executes local updates only during odd iterations.

\begin{itemize}
    \item FedAvg: at each global iteration $m$,
    \begin{equation}
        \w_i^m = \arg\min F_i(\w; \Dc_i).
    \end{equation}
    \item FedProx: at each global iteration $m$,
    \begin{equation}
        \w_i^m = \arg\min g_i(\w; \Bar{\w}^m).
    \end{equation}
    \item Upcycled-FL: at each odd iteration $2m-1$,
    \begin{equation}
    \w_i^{2m-1} = \arg\min g_i(\w; \Bar{\w}^{2m-2}). \label{equ:local update}
    \end{equation}
\end{itemize}

In this work, we employ stochastic gradient descent (SGD) for local updates. While alternative optimizers could also be used, they are not the focus of this work.

\subsubsection{Global update}
To further reduce communication overhead, clients transmit their updates or gradients rather than the complete local models.
On the server side, both FedAvg and FedProx conduct a single global step. By contrast, Upcycled-FL implements two global steps by reusing the intermediate updates and applying a first-order approximation to the even iterations.
\begin{itemize}
    \item FedAvg and FedProx: at global iteration $m$,
    \begin{equation} 
    \Bar{\w}^{m} = \Bar{\w}^{m-1} + \sum_{i \in \Ic} p_i (\w_i^{m} - \Bar{\w}^{m-1}).
    \end{equation}
    \item Upcycled-FL: at global iteration $2m-1$ and $2m$,
    \begin{equation} \label{equ: global update odd}
    \Bar{\w}^{2m-1} = \Bar{\w}^{2m-2} + \sum_{i \in \Ic} p_i (\w_i^{2m-1} - \Bar{\w}^{2m-2}),
    \end{equation}
    \begin{equation} \label{equ: global update even}
    \Bar{\w}^{2m} = \Bar{\w}^{2m-1} + \frac{\mu}{\mu + \lambda_m} (\Bar{\w}^{2m-1} - \Bar{\w}^{2m-2}).
    \end{equation}
\end{itemize}

The transmission of the updates from clients to the server could potentially expose information about the local datasets.
This motivates the adoption of DP that offers privacy assurances independent of the server's computation and data processing resources.
It should be noted that in the case of Upcycled-FL, information leakage occurs only during odd iterations because $\mathcal{D}_i$ is not utilized in even iterations. This contributes to enhanced privacy protection and improves the trade-off between privacy and accuracy.
The details will be provided in Sec.~\ref{subsec:privacytheorem}.

% \vspace{-0.2in}
\subsection{Communication Model} 
\label{subsec: comm}
% \vspace{-0.05in}
\begin{figure}[t]
    \centering
    \includegraphics[scale=0.25]{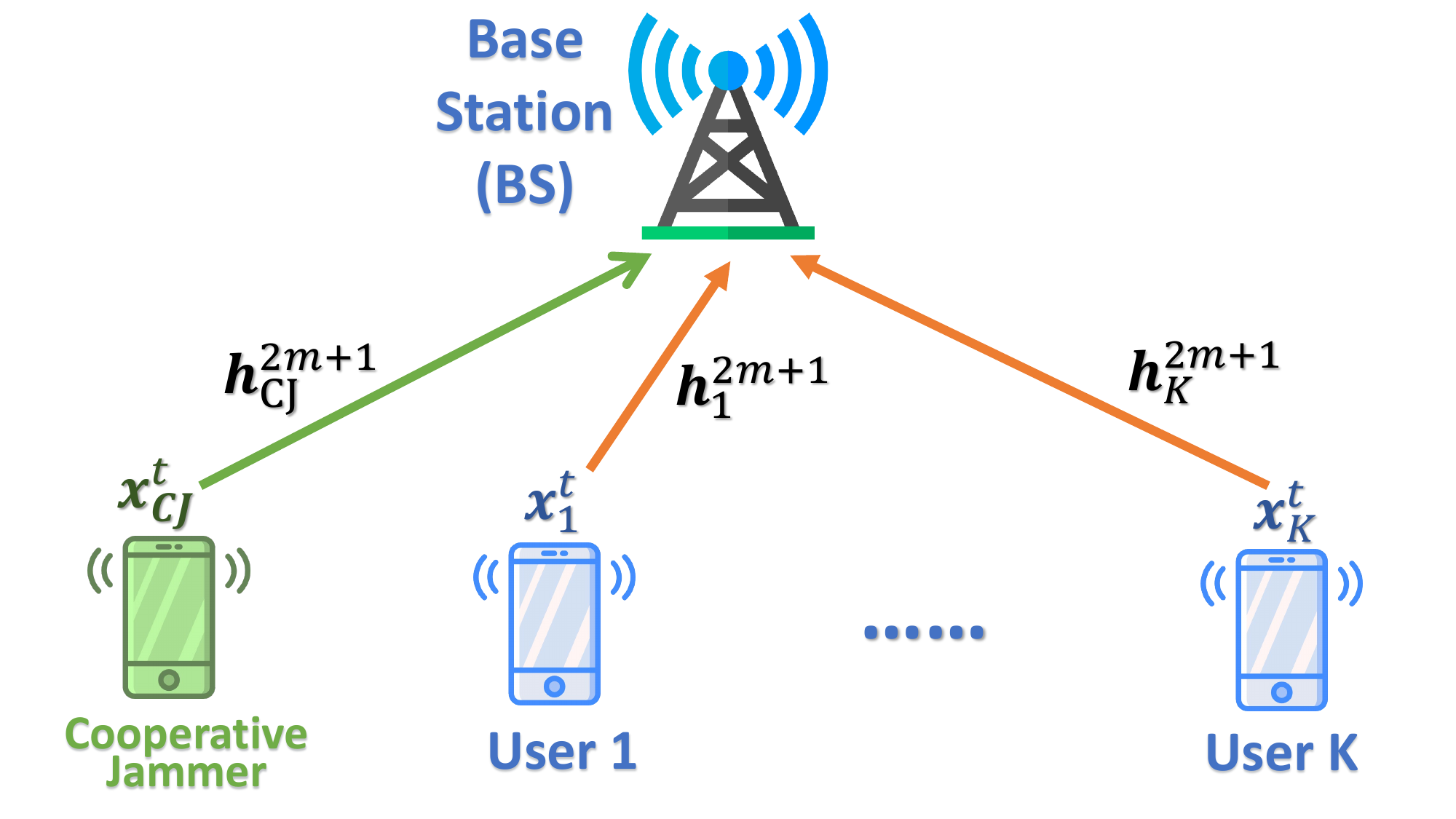}
    % \vspace{-0.15in}
    \caption{The differential private over-the-air communication system.}
    \label{fig:systemmodel}
    % \vspace{-0.1in}
\end{figure}

We consider an over-the-air federated learning (OTA-FL) system as illustrated in Fig.~\ref{fig:systemmodel}.
The setup includes a base station serving as the parameter server, $K$ users, and a cooperative jammer (CJ), each equipped with a single antenna.
The CJ's role is specifically to transmit artificial noise when needed, i.e., the privacy requirement is too stringent for the wireless medium itself to provide.
We consider that the downlink communication is synchronous and error-free, i.e., clients can receive global model parameters perfectly.
The uplink communication assumes block flat-fading channels, where channel gains are stable within a communication round but vary independently from one iteration to the next. 
Our system employs analog non-orthogonal multiple access (NOMA), which enables over-the-air computation.
Consistent with previous studies on differentially private OTA-FL~\cite{seif2020wireless,sonee2021wireless,park2023differential, liao2022over, hasirciouglu2021private, mohamed2021privacy, yan2022private, jiang2022optimized, koda2020differentially, liu2020privacy, yan2023over, zhang2023communication, wei2023differentially}, we consider that all users have their perfect channel state information (CSI). This allows each client to compensate for its own channel.

Define $\x_i^t \in \mb{R}^d$ as the signal transmitted by client $i$ in iteration $t$ and $\x_{CJ}^t \in \mb{R}^d$ as the artificial noise generated by the cooperative jammer. Consequently, the signal received at the PS is as follows:
\begin{equation}
    \y^{t} = \sum_{i \in \mathcal{I}} h^{t}_i \x_i^{t} + h^{t}_{CJ} \x_{CJ}^{t} + \z^{t},
\end{equation}
where $h^t_i, h^t_{CJ}\in \mb{C}$ denote the channel gains from client $i$ and the cooperative jammer, respectively, $\z^t$ represents i.i.d. additive white Gaussian noise (AWGN) with a mean of zero and a variance of $\sigma_c^2$.

Furthermore, the transmit power constraint for client $i$ is specified as follows:
\begin{equation} \label{inequ:powerconstr}
    \Eb[\| \x_i^t \|^2] \leq P_i, \forall i \in \Ic, \forall t, 
\end{equation}
where $P_i$ is the maximum transmit power of client $i$. 

% \vspace{-0.1in}
\subsection{Differential Privacy}
% \vspace{-0.05in}
We consider that the parameter server is ``honest-but-curious''. That is, while the server does not actively seek to attack the learning process and is in compliance with all its functions to facilitate the training data, it might attempt to infer information about the local datasets by analyzing the sequence of signals $\{\y^m\}_{m=1}^M$ received across $M$ consecutive rounds.
Differential privacy (DP) provides a mathematical framework for quantifying privacy by limiting the discrepancy between two conditional probability distributions:
$P(\y|\Dc)$ and $P(\y|\Dc')$, where $\y = \{\y^m\}_{m=1}^M$ represents the received signals, and $\Dc, \Dc'$ are two ``neighboring" datasets.
These datasets are considered neighboring if they differ by the alteration of a single data sample in one client within the system.
We denote $\|\Dc - \Dc'\|_1$ as the cardinality difference between the two sets, leading to the following formal definition of differential privacy.

% \vspace{-0.05in}
\begin{defn}
    (Differential Privacy~\cite{dwork2014algorithmic}) Consider two neighboring datasets $\Dc$ and $\Dc'$ such that $\|\Dc - \Dc'\|_1=1$. A randomized algorithm $\mathcal{M}: X^n \rightarrow \mathcal{R}$ is $(\varepsilon,\delta)$-differential privacy if for all outputs $\mathcal{S} \subseteq \mathcal{R}$ and all pairs of $\Dc, \Dc'$ we have:
    \begin{equation}
        Pr[\mathcal{M}(\Dc) \in \mathcal{S}] \leq e^{\varepsilon} Pr[\mathcal{M}(\Dc') \in \mathcal{S}] + \delta.
    \end{equation}
\end{defn}

% \vspace{-0.1in}
\section{Adaptive Private Power Control} 
\label{sec: alg}
% \vspace{-0.1in}

In this section, we present an adaptive power control strategy designed to offer differential privacy guarantees while maintaining robust learning performance. 
Our proposed approach is a fully decentralized method, implementable by each client in each iteration.
Importantly, our proposed power control mechanism is independent of the underlying FL algorithm, meaning it can be seamlessly integrated into any FL framework. 
While our primary focus is on Upcycled-FL, we also apply it to FedAvg and FedProx to demonstrate its versatility across different FL algorithms.

% \vspace{-0.1in}
\begin{algorithm}[t]
    \caption{Private Upcycled-OTA-FL} \label{alg:upcycled-ota} 
    \begin{algorithmic}[1]
    \STATE 
    \emph{\bf Initialization: global model $\w_0$, $\lambda_m>0$, $\alpha^m_i$, $\alpha^m_u$, $\alpha^m_{CJ}$,$\tau$, $\forall i \in \Ic$.}
    \FOR{$m=1, \dots, M$}
        % \STATE {The server sends $\Bar{\w}^{2m-2}$ to the clients, the clients do the local training by~\eqref{equ:local update}. Then all the clients send updates to the server, CJ sends the noise optionally}.
        \STATE {The server sends $\Bar{\w}^{2m-2}$ to the clients, the clients do the local training by~\eqref{equ:local update}}.
        \STATE {Device $i$ transmits local update to the server by~\eqref{equ:transignal}, and CJ transmits~\eqref{equ:ANG} only when it is needed}.
        \STATE {The server receives the aggregated signal and does two steps of global model update $\Bar{\w}^{2m-1}, \Bar{\w}^{2m}$ by~\eqref{equ: global update odd},~\eqref{equ: global update even}, then broadcasts $\Bar{\w}^{2m}$ to clients}.
    \ENDFOR
    \end{algorithmic}
\end{algorithm}
% \vspace{-0.1in}

\subsection{Adaptive Power Control}
\label{subsec:pcdesign}
% \vspace{-0.05in}
We explore a dynamic power control (PC) approach implemented by both the server and the clients. 
Initially, we discuss this strategy within the context of the Upcycled-FL framework as shown in Algorithm~\ref{alg:upcycled-ota}, and subsequently, we extend our examination to include FedAvg and FedProx. 
It is important to reiterate that in Upcycled-FL, client participation in global training occurs only during odd iterations.
Following the completion of local training, client $i$ constructs its transmit signal $\x_i^{2m+1}$ in round $2m+1$ as follows:
\begin{equation} \label{equ:transignal}
    \x_i^{2m+1} = \alpha_i^{2m+1} (\w_i^{2m+1} - \Bar{\w}^{2m}),
\end{equation}
where $\alpha^{2m+1}_i$ represents the PC parameter for client $i$, $(\w_i^{2m+1} - \Bar{\w}^{2m})$ denotes the local update.

For simplicity, we assume that the gradient of the local function is bounded as in~\cite{liu2020privacy,seif2020wireless}. 
This condition can be achieved in practice by clipping the updates.
\begin{assum} \label{a_bounds_local} (Bounded Gradient)
    The gradient of local objective is bounded, i.e., $\|\nabla g_i(\w)\|^2 \leq \tau$.
\end{assum}

Note that unlike the majority of existing literature, this work does not involve adding artificial noise to each client's transmissions. 
This enables clients to use their transmit powers solely for sending training information. 
Additionally, with clients having the channel state information (CSI), we can utilize channel inversion.
In our previous work~\cite{mao2022charles}, we have proposed a joint learning and communication design to develop an adaptive OTA-FL framework. Inspired by it,
we design the PC factor for client $i$ as:
\begin{equation} \label{equ: PC_client}
    \alpha^{2m+1}_i = \frac{\alpha^{2m+1}_u p_i}{h^{2m+1}_i \tau s_i^{2m+1}},
\end{equation}
where $\alpha^{2m+1}_u$ represents the PC factor for server, $p_i$ denotes the weight of client $i$, $\tau$ refers to the bound of local gradients defined above, $s_i^{2m+1}$ is employed to meet the transmit power constraints in~\eqref{inequ:powerconstr}.
During each odd iteration $2m+1$, we incorporate~\eqref{equ: PC_client} into~\eqref{equ:transignal} to calculate the appropriate $s_i^{2m+1}$, ensuring compliance with the transmit power constraint as specified in~\eqref{inequ:powerconstr}. 
Naturally, $s_i^{2m+1}$ plays a crucial role in mitigating the impact of fading. 
Moreover, we maintain $s_i^{2m+1} \geq 1$ to bound the privacy level $\varepsilon$, as will be further discussed in the following section. 
It is worth mentioning that our design is unique in that it relies exclusively on data from the current iteration of client $i$, thus supporting a fully decentralized framework that avoids dependence on information from other iterations.

We denote the signal that the cooperative jammer transmits in round $2m+1$ as $\x_{CJ}^{2m+1}$, which generates artificial noise solely. In particular,
\begin{equation} \label{equ:ANG}
    \x_{CJ}^{2m+1} = \alpha_{CJ}^{2m+1} \n_{CJ}^{2m+1}, 
\end{equation}
where $\n_{CJ}^{2m+1}$ represents standard Gaussian noise and $\alpha_{CJ}^{2m+1}$ denotes power control parameter.
We employ $\alpha_{CJ}^{2m+1}$ to adjust the amount of artificial noise introduced into the system. 
The cooperative jammer, positioned among the clients, injects this noise unrecognizable to the ``honest-but-curious" parameter server.
While channel noise naturally adds some level of privacy, as discussed in~\cite{liu2020privacy}, it may prove insufficient under stringent privacy conditions. 
In such cases, we can augment the system with a CJ that delivers additional artificial noise to fulfill the privacy demand.
Practically, the CJ can be an external entity or its role can rotate among the clients.
Specifically, in each iteration, a designated client can skip local training to send artificial noise to the server, with a different client taking up this role in each iteration. The design details of $\alpha_{CJ}^{2m+1}$ will be elaborated in the following section.

Finally, PS adjusts the received signal by applying its power control factor $\alpha_u^{2m+1}$, resulting in the following updated global model:
\begin{align}
    \Bar{\w}^{2m+1} = \Bar{\w}^{2m} &+  \sum_{i \in \mathcal{I}} \frac{h^{2m+1}_i}{\alpha^{2m+1}_u} \x_i^{2m+1} \nonumber\\
    &+ \frac{h^{2m+1}_{CJ} \alpha_{CJ}^{2m+1}}{\alpha^{2m+1}_u}  \n_{CJ}^{2m+1} + \frac{\z^{2m+1}}{\alpha^{2m+1}_u}.
\end{align}

As can be seen from the development, the proposed power control strategy is independent of any particular learning algorithm. 
We thus also consider FedAvg and FedProx as other examples. 
The primary modification involves adjusting the iteration labeling from $2m+1$ in odd iterations to $m$ for every global iteration, reflecting the fact that clients transmit local updates in each iteration in FedAvd and FedProx. Specifically, during global iteration $m$,
\begin{equation} \label{equ: PC_client_fedavg}
    \alpha^{m}_i = \frac{\alpha^{m}_u p_i}{h^{m}_i \tau s_i^{m}},
\end{equation}
\begin{equation} \label{equ:ANG_fedavg}
    \x_{CJ}^{m} = \alpha_{CJ}^{m} \n_{CJ}^{m}, 
\end{equation}
\begin{align}
    \Bar{\w}^{m+1} = \Bar{\w}^{m} &+  \sum_{i \in \mathcal{I}} \frac{h^{m}_i}{\alpha^{m}_u} \x_i^{m} + \frac{h^{m}_{CJ} }{\alpha^{m}_u}  \x_{CJ}^{m} + \frac{\z^{m}}{\alpha^{m}_u}.
\end{align}

% \vspace{-0.15in}
\subsection{Privacy Analysis}
\label{subsec:privacytheorem}
% \vspace{-0.1in}
We provide the privacy analysis and quantify the overall privacy loss associated with our proposed approach. 
Again, we focus on Upcycled-FL, and subsequently extend the results to FedAvg and FedProx.
It should be noted that within the Upcycled-FL framework, privacy leakage happens exclusively during odd iterations, which are the instances when local datasets are in use.
To provide a comprehensive result, we recapitulate the following lemma from our prior work~\cite{yin2023federated}:
\begin{lem} \label{lemma:1}
    For any integer $m \geq 1$, if the total privacy loss is bounded by $\varepsilon_m$ up to the $(2m-1)$-th iteration, it remains bounded through the $2m$-th iteration.
\end{lem}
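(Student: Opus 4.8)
The plan is to prove this as a direct consequence of the \emph{post-processing immunity} of differential privacy: once a mechanism is $(\varepsilon,\delta)$-DP, any further transformation of its output that does not re-access the private data cannot increase the privacy loss. The central structural observation is that in Upcycled-FL the local datasets $\Dc_i$ are touched only in odd iterations, through the local update~\eqref{equ:local update} and the corresponding odd global step~\eqref{equ: global update odd}; the even step~\eqref{equ: global update even} is entirely data-free.

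First I would formalize the hypothesis: let $\Mc_{2m-1}$ denote the mechanism that maps the data to the released sequence $(\Bar{\w}^{1},\dots,\Bar{\w}^{2m-1})$, equivalently the received signals up to iteration $2m-1$. By assumption this mechanism is $(\varepsilon_m,\delta)$-DP. Next I would examine~\eqref{equ: global update even} and observe that $\Bar{\w}^{2m}$ is obtained from $\Bar{\w}^{2m-1}$ and $\Bar{\w}^{2m-2}$ via the affine map
\[
\Bar{\w}^{2m} = \Bar{\w}^{2m-1} + \frac{\mu}{\mu+\lambda_m}\bigl(\Bar{\w}^{2m-1}-\Bar{\w}^{2m-2}\bigr),
\]
whose only inputs are quantities already contained in the output of $\Mc_{2m-1}$, and whose coefficient $\mu/(\mu+\lambda_m)$ is a predetermined hyperparameter independent of $\Dc$. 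Thus the mechanism $\Mc_{2m}$ producing $(\Bar{\w}^{1},\dots,\Bar{\w}^{2m})$ factors as $\Mc_{2m}=\Phi\circ\Mc_{2m-1}$, where $\Phi$ merely appends the deterministic, data-independent image of the even step. Invoking the post-processing theorem then yields that $\Mc_{2m}$ is also $(\varepsilon_m,\delta)$-DP, i.e., the total privacy loss remains bounded by $\varepsilon_m$ through the $2m$-th iteration.

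The one point requiring care is verifying that $\Phi$ genuinely carries no additional dependence on the private data: specifically that $\lambda_m$, and hence the mixing coefficient, is fixed a priori rather than tuned on $\Dc$, and that the even round introduces no fresh client transmission, channel realization, or noise term that could leak information about the datasets. Since the even iteration neither solicits new client transmissions nor invokes~\eqref{equ:local update}, there is no new data access, so $\Phi$ is a legitimate post-processing map and the argument closes. A secondary, purely bookkeeping step is to phrase the guarantee at the signal level rather than in terms of the model iterates; because the two are related by an invertible, data-independent transformation, the DP bound transfers directly.
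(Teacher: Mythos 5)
Your proof is correct and rests on exactly the idea the paper relies on: the even-iteration update~\eqref{equ: global update even} is a data-free, deterministic post-processing of already-released iterates, so it cannot increase the privacy loss. The paper itself imports this lemma from its prior work~\cite{yin2023federated} rather than reproving it, but its Appendix~A decomposition of the privacy loss into odd-iteration terms only is justified by the same post-processing observation you make.
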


We employ the moments accountant approach~\cite{abadi2016deep}, since it allows for a more accurate assessment of the cumulative privacy loss, especially for intricate, iterative processes that traditional composition theorems cannot adequately address.
Leveraging Lemma~\ref{lemma:1}, we specify the following privacy level:
\begin{thm} \label{thm:privacy}
    Consider the adaptive power control with Algorithm~\ref{alg:upcycled-ota} over $2M$ iterations, then for any $\delta \in [0,1]$, the algorithm is $(\varepsilon,\delta)$-differential privacy for client $i$ for 
    \begin{align}
        &\varepsilon = 2 \sqrt{ \frac{1}{2|\Dc|^2} \sum_{m=1}^M \frac{1}{(s_i^{2m+1})^2(\sigma^{2m+1})^2} \log \left(\frac{1}{\delta}\right)} \nonumber\\
        &+ \frac{1}{2|\Dc|^2} \sum_{m=1}^M \frac{1}{(s_i^{2m+1})^2(\sigma^{2m+1})^2}  \\
        & \leq 2 \sqrt{ \frac{1}{2|\Dc|^2} \sum_{m=1}^M \frac{1}{(\sigma^{2m+1})^2} \log \left(\frac{1}{\delta}\right)} + \frac{1}{2|\Dc|^2} \sum_{m=1}^M \frac{1}{(\sigma^{2m+1})^2} \label{inequ:privacy_bound}
    \end{align}
    where $|\Dc| = \sum_{i \in \Ic} |\Dc_i|$ is the total number of training data points, $(\sigma^{2m+1})^2$ is variance of the equivalent additive noise, $s_i^{2m+1}$ is parameter to satisfy transmit power constraint.
\end{thm}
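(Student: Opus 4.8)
The plan is to show that each privacy-leaking (odd) iteration of Algorithm~\ref{alg:upcycled-ota} acts as a \emph{Gaussian mechanism} whose $L_2$-sensitivity and equivalent noise variance I can pin down in closed form, and then to accumulate the privacy loss with the moments accountant of~\cite{abadi2016deep} rather than with naive composition. First I would substitute the client power-control factor~\eqref{equ: PC_client} and the transmit signal~\eqref{equ:transignal} into the server's odd-iteration global update. The key simplification is that the channel inversion built into $\alpha_i^{2m+1}$ cancels the fading coefficient $h_i^{2m+1}$, so client $i$'s contribution to $\Bar{\w}^{2m+1}-\Bar{\w}^{2m}$ reduces to $\frac{p_i}{\tau s_i^{2m+1}}(\w_i^{2m+1}-\Bar{\w}^{2m})$, free of the channel. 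The two remaining stochastic terms, the CJ signal $\frac{h^{2m+1}_{CJ}\alpha_{CJ}^{2m+1}}{\alpha_u^{2m+1}}\n_{CJ}^{2m+1}$ and the scaled channel noise $\frac{\z^{2m+1}}{\alpha_u^{2m+1}}$, are independent zero-mean Gaussians, so their sum is a single Gaussian whose variance I define to be $(\sigma^{2m+1})^2$. Hence the server's observation at an odd round is exactly a Gaussian-mechanism release of the aggregated update, and by Lemma~\ref{lemma:1} the even iterations add no further leakage, so it suffices to account for the $M$ odd rounds.

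Next I would bound the $L_2$-sensitivity of this release with respect to client $i$. Two neighboring datasets $\Dc,\Dc'$ that differ in a single sample of client $i$ alter only the term $\frac{p_i}{\tau s_i^{2m+1}}(\w_i^{2m+1}-\Bar{\w}^{2m})$. Writing $p_i=|\Dc_i|/|\Dc|$ and invoking Assumption~\ref{a_bounds_local} together with the fact that a one-sample change perturbs the $|\Dc_i|$-averaged local gradient (and hence the local update) by a factor $1/|\Dc_i|$, the sensitivity collapses to $\Delta_m = \frac{1}{|\Dc|\,s_i^{2m+1}}$: the $|\Dc_i|$ from averaging cancels the $|\Dc_i|$ inside $p_i$, and the gradient bound $\tau$ cancels against the $\tau$ in the denominator of $\alpha_i^{2m+1}$, leaving the total count $|\Dc|$ and the power-normalizing $s_i^{2m+1}$.

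Then I would apply the accountant. For a Gaussian mechanism of sensitivity $\Delta_m$ and noise standard deviation $\sigma^{2m+1}$, a direct computation of the log moment generating function of the privacy loss gives $\alpha_m(\lambda)=\frac{\Delta_m^2\,\lambda(\lambda+1)}{2(\sigma^{2m+1})^2}$. By the composability of the moments accountant, the total log moment over the $M$ odd rounds is $\alpha(\lambda)=A\,\lambda(\lambda+1)$ with $A=\frac{1}{2|\Dc|^2}\sum_{m=1}^M \frac{1}{(s_i^{2m+1})^2(\sigma^{2m+1})^2}$. Converting to $(\varepsilon,\delta)$ through the tail bound $\delta=\min_\lambda \exp(\alpha(\lambda)-\lambda\varepsilon)$, the inner minimization is a quadratic in $\lambda$ whose optimizer $\lambda^{*}=(\varepsilon-A)/(2A)$ yields $(\varepsilon-A)^2=4A\log(1/\delta)$, i.e. $\varepsilon=A+2\sqrt{A\log(1/\delta)}$, which is precisely the equality in the statement. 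The final inequality in~\eqref{inequ:privacy_bound} then follows immediately from the design constraint $s_i^{2m+1}\ge 1$, since dropping the $(s_i^{2m+1})^2$ factors only enlarges each summand.

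I expect the sensitivity step to be the main obstacle and the most error-prone part of the argument: correctly tracking how a single-sample change propagates through the local SGD update under Assumption~\ref{a_bounds_local} (including the exact constant, and whether one uses the gradient-norm bound or its square as the clipping threshold) is what makes the $|\Dc_i|$ and $\tau$ factors cancel cleanly to give the stated $\frac{1}{|\Dc|\,s_i^{2m+1}}$. A secondary technical point is that the $\lambda$-optimization in the tail bound requires $\varepsilon>A$ (so that $\lambda^{*}>0$) and, in the original discrete formulation, an integer $\lambda$; I would either verify these hold in the operating regime or argue via the standard continuous relaxation of the accountant. The remaining steps (channel inversion, Gaussianity of the aggregate noise, and the closed-form moment of the Gaussian mechanism) are routine once the mechanism is identified.
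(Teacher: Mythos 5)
Your proposal is correct and follows essentially the same route as the paper's proof: reduce each odd iteration to a Gaussian mechanism whose sensitivity collapses to $\frac{1}{|\Dc|\,s_i^{2m+1}}$ after the channel-inversion, $\tau$, and $|\Dc_i|$ cancellations, compose the per-round log moments $\frac{\Delta_m^2\lambda(\lambda+1)}{2(\sigma^{2m+1})^2}$ via the moments accountant, convert with the Abadi tail bound, and finish with $s_i^{2m+1}\geq 1$. The only cosmetic difference is that you optimize $\delta=\min_\lambda\exp(\alpha(\lambda)-\lambda\varepsilon)$ for fixed $\varepsilon$ while the paper minimizes $\varepsilon=\min_\lambda\frac{\alpha(\lambda)+\log(1/\delta)}{\lambda}$ for fixed $\delta$; these are equivalent formulations yielding the identical expression.
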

\begin{proof}
The proof is detailed in Appendix A.
\end{proof}

Theorem~\ref{thm:privacy} shows that the privacy level is influenced by the total size of the training dataset, the cumulative effective noise per iteration, and a constant ensuring compliance with communication constraints. The effective noise, $(\sigma^{2m+1})^2$, is composed of two elements: channel noise and artificial noise. As previously noted, channel noise alone offers some level of privacy protection.
We now proceed to detail the design of the artificial noise, i.e., the cooperative jamming signal.
% \vspace{-0.1in}
\begin{thm} \label{thm:PC_{CJ}Ndevice}
    Consider the adaptive power control with Algorithm~\ref{alg:upcycled-ota} over $2M$ iterations, then the algorithm is $(\varepsilon,\delta)$-differential private for any $\varepsilon > 0$ if the power control factor for cooperative jammer is designed as
    \begin{equation}
        \alpha^{2m+1}_{CJ} \geq \frac{\alpha^{2m+1}_u}{h^{2m+1}_{CJ}} \sqrt{ \frac{M}{2|\Dc|^2 a^2} \log \frac{1}{\delta} - \frac{\sigma_{c}^2}{(\alpha^{2m+1}_u)^2}},
    \end{equation}
    \begin{equation}
        where \quad a = -\log \frac{1}{\delta} + \sqrt{\left(\log \frac{1}{\delta}\right)^2 + \varepsilon \log \frac{1}{\delta}},
    \end{equation}
    where $|\Dc| = \sum_{i \in \Ic} |\Dc_i|$ is the total number of data points.
\end{thm}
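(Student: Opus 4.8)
The plan is to run Theorem~\ref{thm:privacy} in reverse: instead of reading off $\varepsilon$ from a given noise level, I would fix the target pair $(\varepsilon,\delta)$, determine how large the per-iteration effective noise $(\sigma^{2m+1})^2$ must be, and then translate that requirement into a lower bound on the jammer power-control factor $\alpha^{2m+1}_{CJ}$. The first step is to identify $(\sigma^{2m+1})^2$ from the odd-iteration global update. The two noise contributions reaching the server after the $1/\alpha^{2m+1}_u$ scaling are the jammer term $\frac{h^{2m+1}_{CJ}\alpha^{2m+1}_{CJ}}{\alpha^{2m+1}_u}\n_{CJ}^{2m+1}$ and the channel term $\frac{\z^{2m+1}}{\alpha^{2m+1}_u}$. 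Since these are independent with $\n_{CJ}^{2m+1}$ standard Gaussian and $\z^{2m+1}$ of variance $\sigma_c^2$, the equivalent additive-noise variance is
\[
(\sigma^{2m+1})^2 = \frac{(h^{2m+1}_{CJ})^2 (\alpha^{2m+1}_{CJ})^2 + \sigma_c^2}{(\alpha^{2m+1}_u)^2},
\]
where $(h^{2m+1}_{CJ})^2$ is understood as the squared channel magnitude.

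Next I would invert the looser privacy bound~\eqref{inequ:privacy_bound}, which already uses $s^{2m+1}_i \ge 1$. Writing $x \triangleq \frac{1}{2|\Dc|^2}\sum_{m=1}^M (\sigma^{2m+1})^{-2}$, the right-hand side is $2\sqrt{x\log(1/\delta)} + x$, which is strictly increasing in $x$; hence enforcing $2\sqrt{x\log(1/\delta)}+x \le \varepsilon$ is sufficient for $(\varepsilon,\delta)$-DP. Setting this expression equal to $\varepsilon$ yields a quadratic in $\sqrt{x}$, namely $(\sqrt{x})^2 + 2\sqrt{\log(1/\delta)}\,\sqrt{x} - \varepsilon = 0$, whose positive root is $\sqrt{x} = -\sqrt{\log(1/\delta)} + \sqrt{\log(1/\delta)+\varepsilon}$. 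Multiplying by $\sqrt{\log(1/\delta)}$ shows this equals exactly $a/\sqrt{\log(1/\delta)}$ with $a$ as defined in the statement, so the DP requirement becomes
\[
\frac{1}{2|\Dc|^2}\sum_{m=1}^M \frac{1}{(\sigma^{2m+1})^2} \le \frac{a^2}{\log(1/\delta)}.
\]

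Finally I would distribute this aggregate budget evenly over the $M$ active (odd) iterations: a convenient sufficient condition is $(\sigma^{2m+1})^2 \ge \frac{M\log(1/\delta)}{2|\Dc|^2 a^2}$ for every $m$, since summing the reciprocals then returns precisely $a^2/\log(1/\delta)$. Substituting the explicit form of $(\sigma^{2m+1})^2$ and solving the resulting inequality for $\alpha^{2m+1}_{CJ}$ gives
\[
\alpha^{2m+1}_{CJ} \ge \frac{\alpha^{2m+1}_u}{h^{2m+1}_{CJ}} \sqrt{\frac{M}{2|\Dc|^2 a^2}\log\frac{1}{\delta} - \frac{\sigma_c^2}{(\alpha^{2m+1}_u)^2}},
\]
which is the claimed bound.

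I expect the main subtlety to lie in the inversion step and the accompanying sign bookkeeping rather than in any single hard estimate. Concretely, I must confirm that the right-hand side of~\eqref{inequ:privacy_bound} is monotone in $x$ so that every inequality direction is preserved, and that the positive root of the quadratic is the relevant one. I would also flag two caveats. First, the radicand is nonnegative only when channel noise alone is insufficient — exactly the regime in which the CJ is activated; when $\sigma_c^2/(\alpha^{2m+1}_u)^2$ already exceeds $\frac{M}{2|\Dc|^2 a^2}\log\frac{1}{\delta}$ the condition is vacuous, i.e.\ no jamming is needed. Second, the uniform per-iteration allocation is a design choice yielding a sufficient (not necessary) condition, so any other allocation meeting the aggregate constraint would serve equally well.
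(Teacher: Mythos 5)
Your proposal is correct and follows essentially the same route as the paper's proof: invert the upper bound of Theorem~\ref{thm:privacy}, reduce the constraint $2\sqrt{x\log(1/\delta)}+x\le\varepsilon$ to a quadratic whose positive root defines $a$, allocate the noise budget uniformly over the $M$ odd iterations, and solve the resulting per-iteration variance requirement for $\alpha^{2m+1}_{CJ}$. The only differences are cosmetic (you parametrize by the sum itself rather than its square root, solving a quadratic in $\sqrt{x}$ instead of in $x$), and your bookkeeping is in fact slightly more careful than the paper's, which carries a typo $\sigma_c^2/\alpha_u^{2m+1}$ in place of $\sigma_c^2/(\alpha_u^{2m+1})^2$ in the final substitution while your version keeps the correct square.
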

\begin{proof}
    See Appendix B.
\end{proof}

Theorem~\ref{thm:PC_{CJ}Ndevice} highlights the role of channel noise, specifying that if $\frac{M}{2|\Dc|^2 a^2} \log \frac{1}{\delta} < \frac{\sigma_{c}^2}{(\alpha^{2m+1}_u)^2}$, then the transmission of artificial noise is unnecessary as channel noise alone meets the required privacy standards.

We now apply the aforementioned results to FedAvg and FedProx. The primary modification involves the iteration labeling, shifting from $2m+1$ for odd iterations to $m$ for every global iteration, to accommodate the clients now transmitting local updates in each iteration. For clarity and completeness, we reiterate this adjustment here.

\begin{thm} \label{thm:privacy_fedavg}
    Consider the adaptive power control with FedAvg or FedProx over $M$ iterations, then for any $\delta \in [0,1]$, the algorithm is $(\varepsilon,\delta)$-differential privacy for client $i$ for 
    \begin{align}
        &\varepsilon = 2 \sqrt{ \frac{1}{2|\Dc|^2} \sum_{m=1}^M \frac{1}{(s_i^{m})^2(\sigma^{m})^2} \log \left(\frac{1}{\delta}\right)} \nonumber\\
        &+ \frac{1}{2|\Dc|^2} \sum_{m=1}^M \frac{1}{(s_i^{m})^2(\sigma^{m})^2}  \\
        & \leq 2 \sqrt{ \frac{1}{2|\Dc|^2} \sum_{m=1}^M \frac{1}{(\sigma^{m})^2} \log \left(\frac{1}{\delta}\right)} + \frac{1}{2|\Dc|^2} \sum_{m=1}^M \frac{1}{(\sigma^{m})^2} \label{inequ:privacy_bound}
    \end{align}
    where $|\Dc| = \sum_{i \in \Ic} |\Dc_i|$ is the total number of training data points, $(\sigma^{m})^2$ is variance of the equivalent additive noise, $s_i^{m}$ is parameter to satisfy transmit power constraint.
\end{thm}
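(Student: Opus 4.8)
The plan is to mirror the proof of Theorem~\ref{thm:privacy} (Appendix A), since FedAvg and FedProx differ from Upcycled-FL only in that a data-dependent local update—and hence a fresh privacy disclosure—occurs at \emph{every} global iteration rather than only at the odd ones. Concretely, I would first reindex the entire analysis, replacing each odd-iteration label $2m+1$ by the global label $m$ and letting the per-iteration privacy loss accumulate over $m = 1, \dots, M$. The even-iteration bookkeeping invoked for Upcycled-FL (Lemma~\ref{lemma:1}), which certifies that the first-order approximation steps leak no additional information, is no longer needed, because FedAvg and FedProx contain no such reuse step; every iteration is ``active.''

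The substantive step is to cast each global iteration as a Gaussian mechanism and bound its $L_2$ sensitivity. After the server applies $1/\alpha_u^m$ together with the channel inversion built into $\alpha_i^m = \alpha_u^m p_i/(h_i^m \tau s_i^m)$, the contribution of client $i$ to the processed aggregate reduces to $\frac{p_i}{\tau s_i^m}(\w_i^m - \Bar{\w}^{m-1})$ corrupted by additive Gaussian noise of variance $(\sigma^m)^2$, whose two components are the channel noise $\sigma_c^2$ and the cooperative-jammer noise. Using Assumption~\ref{a_bounds_local} to bound the change in client $i$'s local update induced by altering a single sample of $\Dc$, I would establish that the $L_2$ sensitivity of this mechanism is of order $1/(|\Dc|\, s_i^m)$; the design constraint $s_i^m \ge 1$ keeps this sensitivity controlled and is precisely what justifies the final displayed inequality.

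With per-iteration sensitivity $\Delta_m$ and noise $(\sigma^m)^2$ in hand, I would invoke the moments accountant of~\cite{abadi2016deep}. For a Gaussian mechanism the $\lambda$-th log-moment is bounded by $\Delta_m^2 \lambda(\lambda+1)/(2(\sigma^m)^2)$, and by composability these add across iterations, giving a total log-moment $A\,\lambda(\lambda+1)$ with $A = \frac{1}{2|\Dc|^2}\sum_{m=1}^M \frac{1}{(s_i^m)^2 (\sigma^m)^2}$. Converting back to $(\varepsilon,\delta)$ through the tail bound $\delta = \min_{\lambda>0}\exp\!\big(A\lambda(\lambda+1) - \lambda\varepsilon\big)$ and optimizing over $\lambda$ (the minimizer is $\lambda^\star = \sqrt{\log(1/\delta)/A}$) yields $\varepsilon = 2\sqrt{A\log(1/\delta)} + A$, which is exactly the stated equality. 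The upper bound then follows at once by discarding the factors $1/(s_i^m)^2 \le 1$.

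I expect the sensitivity computation of the second step to be the main obstacle: one must track how the weight $p_i$, the gradient bound $\tau$, the power-control normalization, and the per-sample averaging $1/|\Dc|$ combine, and verify that $(\sigma^m)^2$ is expressed in the same normalized coordinates as the signal, so that the ratio $\Delta_m^2/(\sigma^m)^2$ collapses cleanly to $1/(|\Dc|^2 (s_i^m)^2 (\sigma^m)^2)$. Once this normalization is pinned down, the moments-accountant composition and the $\lambda$-optimization are routine and identical to the corresponding steps in Appendix A.
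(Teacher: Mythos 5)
Your proposal is correct and takes essentially the same route as the paper: the paper's own proof of this theorem is exactly the relabeling you describe (replace the odd-iteration index $2m+1$ by $m$ and rerun Appendix~A), and your reconstruction of that argument---per-iteration Gaussian mechanism with sensitivity $1/(|\Dc|\, s_i^m)$ after channel inversion, moments-accountant composition to a total log-moment $A\lambda(\lambda+1)$, tail-bound optimization at $\lambda^\star = \sqrt{\log(1/\delta)/A}$ giving $\varepsilon = 2\sqrt{A\log(1/\delta)}+A$, and the final inequality via $s_i^m \ge 1$---matches the Appendix~A steps one for one. Your observation that the even-iteration bookkeeping (the paper's Lemma~1) becomes unnecessary because every FedAvg/FedProx iteration touches the data is also a correct and accurate account of the only structural difference.
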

\begin{proof}
Change the iteration label from $2m$ to $m$, then the proof is same as that in Appendix A.
\end{proof}
\begin{thm} \label{thm:PC_{CJ}Ndevice_fedavg}
    Consider the adaptive power control with FedAvg or FedProx over $M$ iterations, then the algorithm is $(\varepsilon,\delta)$-differential privacy for any $\varepsilon > 0$ if the power control factor for cooperative jammer is designed as
    \begin{equation}
        \alpha^{m}_{CJ} \geq \frac{\alpha^{m}_u}{h^{m}_{CJ}} \sqrt{ \frac{M}{2|\Dc|^2 a^2} \log \frac{1}{\delta} - \frac{\sigma_{c}^2}{(\alpha^{m}_u)^2}},
    \end{equation}
    \begin{equation}
        where \quad a = -\log \frac{1}{\delta} + \sqrt{\left(\log \frac{1}{\delta}\right)^2 + \varepsilon \log \frac{1}{\delta}},
    \end{equation}
    where $|\Dc| = \sum_{i \in \Ic} |\Dc_i|$ is the total number of data points.
\end{thm}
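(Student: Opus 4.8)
The plan is to obtain this statement as the FedAvg/FedProx counterpart of Theorem~\ref{thm:PC_{CJ}Ndevice}, exactly as Theorem~\ref{thm:privacy_fedavg} was obtained from Theorem~\ref{thm:privacy}: since FedAvg and FedProx transmit local updates at \emph{every} global iteration rather than only at odd ones, the sole structural change is relabeling the transmission index from $2m+1$ to $m$, and the per-iteration privacy accounting is otherwise unchanged. Concretely, I would start from the upper bound on $\varepsilon$ in Theorem~\ref{thm:privacy_fedavg} and invert it to solve for the jammer power control factor $\alpha^m_{CJ}$.

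First I would write the equivalent additive noise variance in terms of the jammer power. From the global update, the residual noise seen at the server after scaling by $1/\alpha^m_u$ is $\frac{h^m_{CJ}\alpha^m_{CJ}}{\alpha^m_u}\,\n^m_{CJ}+\frac{\z^m}{\alpha^m_u}$, so by independence of the artificial and channel noise,
\[
(\sigma^m)^2 = \frac{(h^m_{CJ})^2(\alpha^m_{CJ})^2 + \sigma_c^2}{(\alpha^m_u)^2}.
\]
Next, abbreviating $b=\log\frac{1}{\delta}$ and $A=\frac{1}{2|\Dc|^2}\sum_{m=1}^M\frac{1}{(\sigma^m)^2}$, the bound of Theorem~\ref{thm:privacy_fedavg} reads $\varepsilon = 2\sqrt{Ab}+A$. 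The key algebraic step is to invert this relation: viewing it as a quadratic in $\sqrt{A}$ yields $\sqrt{A}=-\sqrt{b}+\sqrt{b+\varepsilon}$, and I would verify that the closed form $a=-b+\sqrt{b^2+\varepsilon b}$ satisfies $A=a^2/b$ exactly, so that $2a+a^2/b$ collapses to precisely $\varepsilon$. Hence it suffices to enforce $A\le a^2/b$.

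A clean sufficient condition for $A\le a^2/b$ is to make every per-iteration effective noise obey $(\sigma^m)^2\ge \frac{M}{2|\Dc|^2 a^2}\log\frac{1}{\delta}$, since summing $M$ such terms gives $A\le \frac{1}{2|\Dc|^2}\cdot M\cdot\frac{2|\Dc|^2a^2}{M\,b}=a^2/b$. Substituting the variance expression above and solving the resulting inequality for $\alpha^m_{CJ}$ produces the stated lower bound, with the subtracted term $\sigma_c^2/(\alpha^m_u)^2$ inside the square root accounting for the channel-noise contribution—which is exactly why the jammer is unnecessary when that term already dominates. I expect the main obstacle to be not any single hard inequality but the bookkeeping of the quadratic inversion: confirming that the particular $a$ makes $2a+a^2/b$ reduce identically to $\varepsilon$, and checking that the per-iteration noise allocation is simultaneously sufficient for the aggregate bound and consistent with the server scaling $\alpha^m_u$.
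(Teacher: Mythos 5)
Your proposal is correct and follows essentially the same route as the paper: the paper's proof of this theorem is literally the Appendix~B argument with the iteration index relabeled from $2m+1$ to $m$, namely forcing the upper bound of Theorem~\ref{thm:privacy_fedavg} below the target $\varepsilon$, solving the resulting quadratic to obtain $a = -\log\frac{1}{\delta}+\sqrt{(\log\frac{1}{\delta})^2+\varepsilon\log\frac{1}{\delta}}$, imposing the per-iteration noise floor $(\sigma^m)^2 \geq \frac{M}{2|\Dc|^2 a^2}\log\frac{1}{\delta}$, and solving for $\alpha^m_{CJ}$ after subtracting the channel-noise term. Your quadratic inversion in $\sqrt{A}$ rather than in $x=\sqrt{Ab}$ is a cosmetic difference only, and your verification that $2a+a^2/b=\varepsilon$ holds identically is a correct (and slightly more explicit) check of the same step.
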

\begin{proof}
    Same steps as in Appendix B.
\end{proof}

% \vspace{-0.1in}
\section{Convergence Analysis}
\label{sec: conv}
% \vspace{-0.1in}
We begin our discussion with a theoretical convergence analysis for Algorithm~\ref{alg:upcycled-ota}, then extend the analysis to FedAvg and FedProx.
To begin, we adopt the standard assumptions used in our previous work~\cite{yin2023federated}.

\begin{defn} ($B$-Dissimilarity) \label{d_dissimilar}
    The local loss function $F_i$ is $B$-dissimilar if $\ \Eb [\|\nabla F_i(\w) \|^2] \leq \|\nabla f(\w) \|^2 B^2, \forall \w$, where the expectation is over clients.
\end{defn}
\begin{lem} \label{lemma_boundg}
    There exists $B$ such that $F_i$ is $B$-dissimilar if $\|\nabla F_i(\w) - \nabla f(\w) \| \leq \kappa_i, \forall \w, \forall i,$ for some $\kappa_i$.
\end{lem}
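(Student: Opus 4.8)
The plan is to exploit the structure of the global objective: since $f(\w) = \sum_{i\in\Ic} p_i F_i(\w;\Dc_i) = \Eb[F_i(\w;\Dc_i)]$ from~\eqref{eq: objective}, differentiating gives the key identity $\nabla f(\w) = \Eb[\nabla F_i(\w)]$, i.e.\ the global gradient is exactly the client-average of the local gradients. This lets me treat $\nabla F_i(\w)$ as $\nabla f(\w)$ plus a zero-mean (over clients) deviation whose norm the hypothesis controls by $\kappa_i$.

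First I would write $\nabla F_i(\w) = \nabla f(\w) + \big(\nabla F_i(\w) - \nabla f(\w)\big)$ and expand the squared norm, obtaining $\|\nabla F_i(\w)\|^2 = \|\nabla f(\w)\|^2 + 2\langle \nabla f(\w),\, \nabla F_i(\w) - \nabla f(\w)\rangle + \|\nabla F_i(\w) - \nabla f(\w)\|^2$. Taking $\Eb[\cdot] = \sum_i p_i(\cdot)$ over clients, the cross term vanishes because $\Eb[\nabla F_i(\w) - \nabla f(\w)] = \nabla f(\w) - \nabla f(\w) = \0$ by the identity above. This produces the exact decomposition $\Eb[\|\nabla F_i(\w)\|^2] = \|\nabla f(\w)\|^2 + \Eb[\|\nabla F_i(\w) - \nabla f(\w)\|^2]$, which isolates the heterogeneity term cleanly.

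Next I would apply the assumption $\|\nabla F_i(\w) - \nabla f(\w)\| \le \kappa_i$ termwise, so that $\Eb[\|\nabla F_i(\w) - \nabla f(\w)\|^2] \le \Eb[\kappa_i^2] =: \kappa^2$, giving $\Eb[\|\nabla F_i(\w)\|^2] \le \|\nabla f(\w)\|^2 + \kappa^2$. To recast this in the multiplicative $B$-dissimilarity form of Definition~\ref{d_dissimilar}, I would factor out $\|\nabla f(\w)\|^2$ and take $B^2 = 1 + \kappa^2/\|\nabla f(\w)\|^2$, so that $B$ is the supremum of $\sqrt{1 + \kappa^2/\|\nabla f(\w)\|^2}$ over the iterates of interest; note $B \ge 1$ always, with $B=1$ iff the clients are homogeneous ($\kappa_i = 0$).

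The main obstacle is precisely this final conversion: the bound degenerates as $\nabla f(\w) \to \0$, since at a stationary point the target right-hand side $B^2\|\nabla f(\w)\|^2$ collapses to zero while $\Eb[\|\nabla F_i(\w)\|^2]$ need not. I would therefore restrict the statement to the regime in which it is actually used in the convergence analysis—iterates whose gradient norm is bounded below by a positive constant, i.e.\ prior to reaching stationarity—over which the supremum defining $B$ is finite. This is the standard reading of $B$-dissimilarity inherited from the FedProx framework, and I would flag this caveat explicitly rather than assert a bound uniform even at critical points.
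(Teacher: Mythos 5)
The paper never actually proves this lemma: it is restated without proof from the authors' prior Upcycled-FL work and descends from FedProx's bounded-dissimilarity framework, so your proposal can only be measured against that standard derivation---and it matches it. The exact decomposition $\Eb_i[\|\nabla F_i(\w)\|^2] = \|\nabla f(\w)\|^2 + \Eb_i[\|\nabla F_i(\w)-\nabla f(\w)\|^2]$ (the cross term vanishes because $\nabla f(\w)=\Eb_i[\nabla F_i(\w)]=\sum_{i} p_i \nabla F_i(\w)$), followed by the termwise bound $\Eb_i[\|\nabla F_i(\w)-\nabla f(\w)\|^2]\le \Eb_i[\kappa_i^2]=:\kappa^2$, is precisely the standard route, and your algebra is correct.

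Your closing caveat is genuine rather than cosmetic: with the quantifier ``$\forall \w$'' inside Definition~\ref{d_dissimilar}, the lemma as literally stated is false. Take two clients with $p_1=p_2=\tfrac{1}{2}$ and scalar losses $F_1(w)=w$, $F_2(w)=-w$: the hypothesis holds with $\kappa_i=1$, yet $\nabla f\equiv 0$ while $\Eb_i[\|\nabla F_i\|^2]\equiv 1$, so no finite $B$ can satisfy $\Eb_i[\|\nabla F_i\|^2]\le B^2\|\nabla f\|^2$ at those points. The honest statement---exactly as in FedProx, and as you write---is $B\le\sqrt{1+\kappa^2/\epsilon}$ on the set $\{\w:\|\nabla f(\w)\|^2\ge\epsilon\}$, i.e.\ away from stationarity. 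This restriction is also consistent with how the paper actually deploys the hypothesis: in Appendix~C the non-degenerate additive form $\|\nabla F_i(\w)\|\le\kappa_i+\|\nabla f(\w)\|\le\kappa_i+G$ is invoked where a uniform bound is needed, and the multiplicative constant $B$ enters only through $\mathbf{C}_1=\frac{1}{2\mu}-\frac{LB}{\mu\rho^2}$, whose positivity is an explicit hypothesis of Theorem~\ref{thm:convergence}. So your proof is correct, your reading of $B$ is the one under which the convergence constants are meaningful, and the flag you raise is a real (if standard) imprecision in the lemma's statement rather than a defect of your argument.
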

The parameter $B$ quantifies the statistical heterogeneity among clients, specifically referring to the non-i.i.d. data distribution.
\begin{assum}($L$-Lipschitz Smooth) \label{a_smooth}
	$\exists L > 0$, such that $ \| \nabla F_i(\w_1) - \nabla F_i(\w_2) \| \leq L \| \w_1 - \w_2 \|$, $\forall \w_1, \w_2$, $\forall i \in \Ic$.
\end{assum}
\begin{assum} \label{a_strongconv} (Strongly Convex Local Function)
    $\forall i,$ $g_i(\w;\Bar{\w}^t) = F_i(\w; \Dc_i) + \frac{\mu}{2} \| \w - \Bar{\w}^t\|^2$ is $\rho$-strongly convex.
\end{assum}
\begin{assum} \label{a_bounds} (Bounded Norms)
    $\|\Bar{\w}^{2m-1} - \Bar{\w}^{2m-2} \| \leq q, \forall m$, and $\|\nabla f(\w)\| \leq G , \forall \w$.
\end{assum}

We should note that the assumption of strong convexity is not made directly on the local loss function $F_i(\w;\Dc_i)$ but rather on the augmented function $F_i(\w; \Dc_i) + \frac{\mu}{2} \| \w - \Bar{\w}^t\|^2$. By selecting a sufficiently large $\mu$, this condition can be satisfied. Importantly, as demonstrated in Section~\ref{sec: exp}, our algorithm achieves convergence even without this assumption.

\begin{thm} \label{thm:convergence}
    (Convergence rate) Under Assumptions~\ref{a_smooth}-\ref{a_bounds}, if $\mathbf{C_1}>0$, over $2M$ global rounds we have
    \begin{multline}
        \min_{m \in [M]} \mb{E} \| \nabla f(\Bar{\w}^{2m-1}) \|^2 \leq \frac{f(\Bar{\w}^0) - f(\Bar{\w}^*)}{M \mathbf{C_1}} + \frac{\mathbf{C}_6}{ \mathbf{C_1}} \\
        + \frac{1}{M \mathbf{C_1}} \sum_{m=1}^M \mathbf{C}_2^m + \mathbf{C}_3^m + \mathbf{C}_4^m + \mathbf{C}_5^m,
    \end{multline}
    where $\alpha_u^{2m+1} = \alpha_u$, $\Eb_i[\cdot]$ is expectation over clients,
    \begin{align}
        &\mathbf{C}_1 = \frac{1}{2\mu} - \frac{LB}{\mu \rho^2}, \\
        &\mathbf{C}_2^m = \frac{\mu}{\mu+\lambda_m} \left(1+\frac{2LB(L+\rho)}{\mu\rho^2} \right) qG, \\
        &\mathbf{C}_3^m = \left( \frac{\mu}{\mu+\lambda_m}\right)^2 L \left(1+\frac{(L+\rho)^2}{\mu \rho^2}\right) q^2, \\
        &\mathbf{C}_4^m = \Eb_i \left[\left( \frac{2\mu(\tau s_i^{2m+1}-1)^2 + (2L-\mu)\mu }{2 \mu^2 (\tau s_i^{2m+1})^2}\right) (\kappa_i + G)^2\right], \\
        &\mathbf{C}_5^m = \frac{Ld}{2} \left( \frac{h_{CJ}^{2m+1} \alpha_{CJ}^{2m+1}}{\alpha_u}\right)^2, \\
        &\mathbf{C}_6 = \frac{L\sigma_c^2d}{2\alpha_u^2}.
    \end{align}
\end{thm}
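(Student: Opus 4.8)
The plan is a per-iteration-pair non-convex descent argument: within each global iteration the server performs one OTA odd update~\eqref{equ: global update odd} followed by one first-order even update~\eqref{equ: global update even}, so I would bound the decrease in $f$ across each such pair, sum over $m=1,\dots,M$, and convert the telescoped bound into a bound on the minimum gradient norm. The engine throughout is the $L$-smoothness inequality (Assumption~\ref{a_smooth}), $f(\mathbf{y})\le f(\mathbf{x})+\langle\nabla f(\mathbf{x}),\mathbf{y}-\mathbf{x}\rangle+\frac{L}{2}\|\mathbf{y}-\mathbf{x}\|^2$, applied once to the odd step and once to the even step and then re-centered so that $\nabla f$ is evaluated at the odd iterate $\Bar{\w}^{2m-1}$.

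For the even step I would insert the explicit update $\Bar{\w}^{2m}-\Bar{\w}^{2m-1}=\frac{\mu}{\mu+\lambda_m}(\Bar{\w}^{2m-1}-\Bar{\w}^{2m-2})$ from~\eqref{equ: global update even}. Cauchy--Schwarz on the linear term with $\|\nabla f\|\le G$ and $\|\Bar{\w}^{2m-1}-\Bar{\w}^{2m-2}\|\le q$ (Assumption~\ref{a_bounds}) gives the $\mathbf{C}_2^m$ term, and the squared-norm term gives $\mathbf{C}_3^m$; the $(L+\rho)/\rho$ factors appearing there arise when I re-express a local-gradient norm at the displaced point through $L$-smoothness and the $\rho$-strong convexity of $g_i$ (Assumption~\ref{a_strongconv}).

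The heart of the argument is the odd step. Because $\w_i^{2m-1}=\arg\min g_i(\cdot;\Bar{\w}^{2m-2})$ and $g_i$ is $\rho$-strongly convex, the stationarity condition $\nabla F_i(\w_i^{2m-1})+\mu(\w_i^{2m-1}-\Bar{\w}^{2m-2})=\mathbf{0}$ lets me rewrite the local update as $\w_i^{2m-1}-\Bar{\w}^{2m-2}=-\frac{1}{\mu}\nabla F_i(\w_i^{2m-1})$. Substituting the power-control factor~\eqref{equ: PC_client} into the received aggregate, the channel gains cancel and client $i$ is reweighted by $\frac{1}{\tau s_i}$, so the effective odd increment is $-\sum_{i}\frac{p_i}{\mu\tau s_i}\nabla F_i(\w_i^{2m-1})$ plus the two noise terms. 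I would then split $\nabla F_i(\w_i^{2m-1})=\nabla f(\Bar{\w}^{2m-1})+\big(\nabla F_i(\w_i^{2m-1})-\nabla f(\Bar{\w}^{2m-1})\big)$, bounding the deviation with the cap $\|\nabla F_i-\nabla f\|\le\kappa_i$ (Lemma~\ref{lemma_boundg}) and $\|\nabla f\|\le G$ so that $\|\nabla F_i(\w_i^{2m-1})\|\le\kappa_i+G$. The mismatch between the ideal unit weight and the actual weight $\frac{1}{\tau s_i}$ produces the factor $(\tau s_i^{2m+1}-1)^2$ inside $\mathbf{C}_4^m$, while the matched part contributes the leading $-\frac1\mu\|\nabla f(\Bar{\w}^{2m-1})\|^2$; combining this with the $B$-dissimilarity bound (Definition~\ref{d_dissimilar}) to control the cross terms through $\rho$ yields the net coefficient $\mathbf{C}_1=\frac{1}{2\mu}-\frac{LB}{\mu\rho^2}$ on $\|\nabla f(\Bar{\w}^{2m-1})\|^2$.

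Finally I take expectation over the independent zero-mean cooperative-jammer and channel noise. The linear noise cross terms vanish, while their second moments, scaled by $\frac{L}{2}$ from the smoothness step, yield $\mathbf{C}_5^m=\frac{Ld}{2}\big(\frac{h_{CJ}^{2m+1}\alpha_{CJ}^{2m+1}}{\alpha_u}\big)^2$ (using $\Eb\|\n_{CJ}\|^2=d$) and $\mathbf{C}_6=\frac{L\sigma_c^2 d}{2\alpha_u^2}$. Summing the per-pair inequality over $m$ telescopes the function values to an initial gap bounded by $f(\Bar{\w}^0)-f(\Bar{\w}^*)$; rearranging isolates $\mathbf{C}_1\cdot\frac1M\sum_m\Eb\|\nabla f(\Bar{\w}^{2m-1})\|^2$, and since $\mathbf{C}_1>0$ the minimum over $m$ is bounded by the average, giving the stated rate. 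I expect the principal obstacle to be the odd-step re-centering: tracking the local-minimizer gradients $\nabla F_i(\w_i^{2m-1})$ back to $\nabla f(\Bar{\w}^{2m-1})$ while simultaneously carrying the per-client scaling $\tau s_i$, since this is exactly where smoothness, strong convexity, and both dissimilarity bounds must be combined to reproduce the precise constants $\mathbf{C}_1$ and $\mathbf{C}_4^m$.
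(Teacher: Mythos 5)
Your proposal is correct and follows essentially the same route as the paper's proof: apply $L$-smoothness across the combined even-plus-odd increment $\Bar{\w}^{2m+1}-\Bar{\w}^{2m-1}$, use the stationarity condition $\w_i^{2m+1}-\Bar{\w}^{2m}=-\frac{1}{\mu}\nabla F_i(\w_i^{2m+1})$ and the power-control cancellation to get the $\frac{p_i}{\tau s_i^{2m+1}}$-weighted gradient sum, bound the weight-mismatch and displacement terms via smoothness, strong convexity, Lemma~\ref{lemma_boundg} and $B$-dissimilarity to produce $\mathbf{C}_1$ and $\mathbf{C}_4^m$, let the zero-mean noise second moments give $\mathbf{C}_5^m$ and $\mathbf{C}_6$, then telescope and bound the minimum by the average. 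The only cosmetic difference is that the paper applies smoothness once to the combined step and extracts the $-\frac{1}{2\mu}\|\nabla f(\Bar{\w}^{2m-1})\|^2$ term via the polarization identity $-\langle a,b\rangle=\frac{1}{2}\left(\|a-b\|^2-\|a\|^2-\|b\|^2\right)$, whereas you describe two smoothness applications with re-centering and a matched/deviation split of the inner product; these lead to the same constants.
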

\begin{proof}
See Appendix C.
\end{proof}

Theorem~\ref{thm:convergence} identifies several factors influencing convergence. 
It illustrates that the tunable parameters $\mu, \lambda_m$ of Algorithm~\ref{alg:upcycled-ota}, along with the system's heterogeneity, are encompassed by terms $\mathbf{C}_2^m$ and $\mathbf{C}_3^m$.
Our power control design, as captured in $\mathbf{C}_4^m$, is closely coupled with the dissimilarity among clients.
Meanwhile, $\mathbf{C}_6$ represents the error due to channel noise, and $\mathbf{C}_5^m$ reflects the error from artificial noise, illustrating the trade-off between privacy and accuracy: higher privacy levels require more noise, which in turns challenges convergence and learning performance.

\textbf{Convergence of FedAvg and FedProx.} 
As previously noted, our power control strategy works with any FL algorithm. Consequently, the convergence of our proposed power control within the FedAvg or FedProx frameworks is guaranteed. Convergence can be demonstrated using similar methods as in~\cite{yang2022over,mao2022charles}. We omit the detailed proof here, as the steps are similar and well-established.

% \vspace{-0.1in}
\section{Numerical Results} 
\label{sec: exp}
% \vspace{-0.1in}
\begin{figure}[t]
    \centering
    \includegraphics[width=0.44\textwidth]{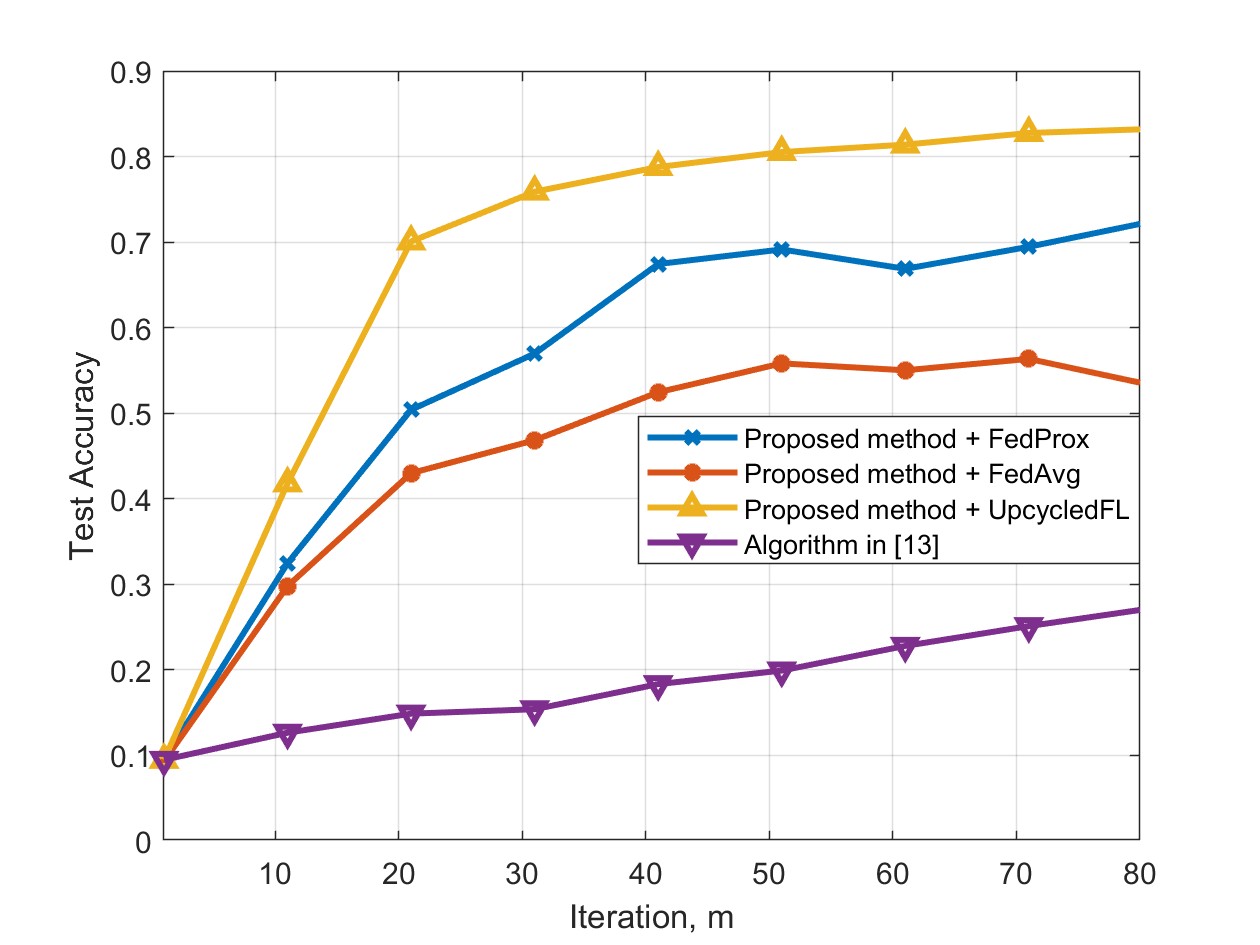}
    % \vspace{-0.15in}
    \caption{Test accuracy without cooperative jammer $(\varepsilon = 6.52, \delta = 10^{-5})$.}
    \label{fig:testacc_1}
    % \vspace{-0.2in}
\end{figure}
\begin{figure}[t]
    \centering
    \includegraphics[width=0.44\textwidth]{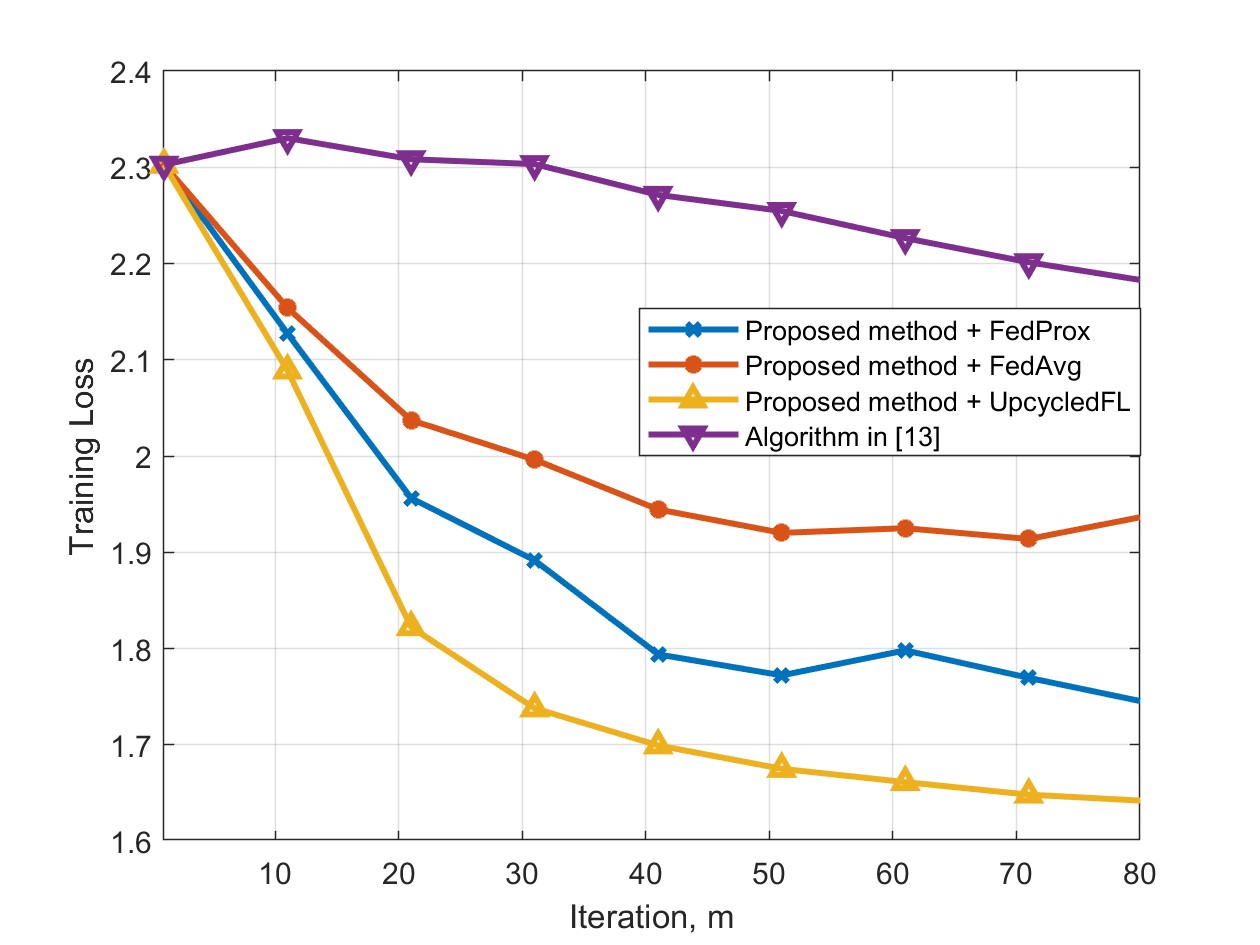}
    % \vspace{-0.1in}
    \caption{Training loss without cooperative jammer $(\varepsilon = 6.52, \delta = 10^{-5})$.}
    \label{fig:trainingloss_1}
    % \vspace{-0.15in}
\end{figure}
In this section, we conduct experiments to evaluate the effectiveness of our proposed scheme.
First, we benchmark the learning performance against baseline models that do not incorporate a cooperative jammer (CJ), aligning with the system described in~\cite{liu2020privacy}. Subsequently, we introduce a CJ and evaluate our proposed power control strategy across various federated learning frameworks, under different levels of differential privacy requirements.

Our experiments primarily focus on non-i.i.d. data distributions. 
However, it is well known that FedAvg does not perform well with non-i.i.d data. Therefore, in Section~\ref{subsec:exp-nonIID}, we also conduct experiments on an i.i.d. synthetic dataset. This allows us to fairly compare our proposed power control scheme with the adaptive power control method applied to FedAvg in~\cite{liu2020privacy}.

We evaluate the performance of the following algorithms by comparing it against several established baselines:
\begin{enumerate}
    \item Our proposed power control applied with Upcycled-FL (Algorithm~\ref{alg:upcycled-ota}).
\item  Our proposed power control strategy with FedProx~\cite{li2020federated} framework.
\item Our proposed power control method applied to the FedAvg~\cite{mcmahan2017communication} algorithm.
\item Baseline: Algorithm from~\cite{liu2020privacy}, an adaptive power control approach in a NOMA system using the FedAvg framework.
\end{enumerate}

In our numerical results, we model a wireless FL system comprising $K=50$ users and a cooperative jammer.
We assume block Raleigh fading channels, characterized by $h_{CJ}^{m}, h_i^{m} \sim \mathcal{CN}(0,1)$ for all transmission iterations $m$ and for all users $i$ in the set $\mathcal{I}$.
For consistency and fair comparison with prior studies, specifically~\cite{liu2020privacy}, we adopt the same signal-to-noise ratio (SNR) definition: $SNR_i = P_i/d \sigma_c^2$, where $d$ represents the model dimension. We set maximum $SNR_i$ as 1dB.

\subsection{Non-IID Data}
\label{subsec:exp-nonIID}

\begin{figure}[t]
    \centering
    \includegraphics[width=0.44\textwidth]{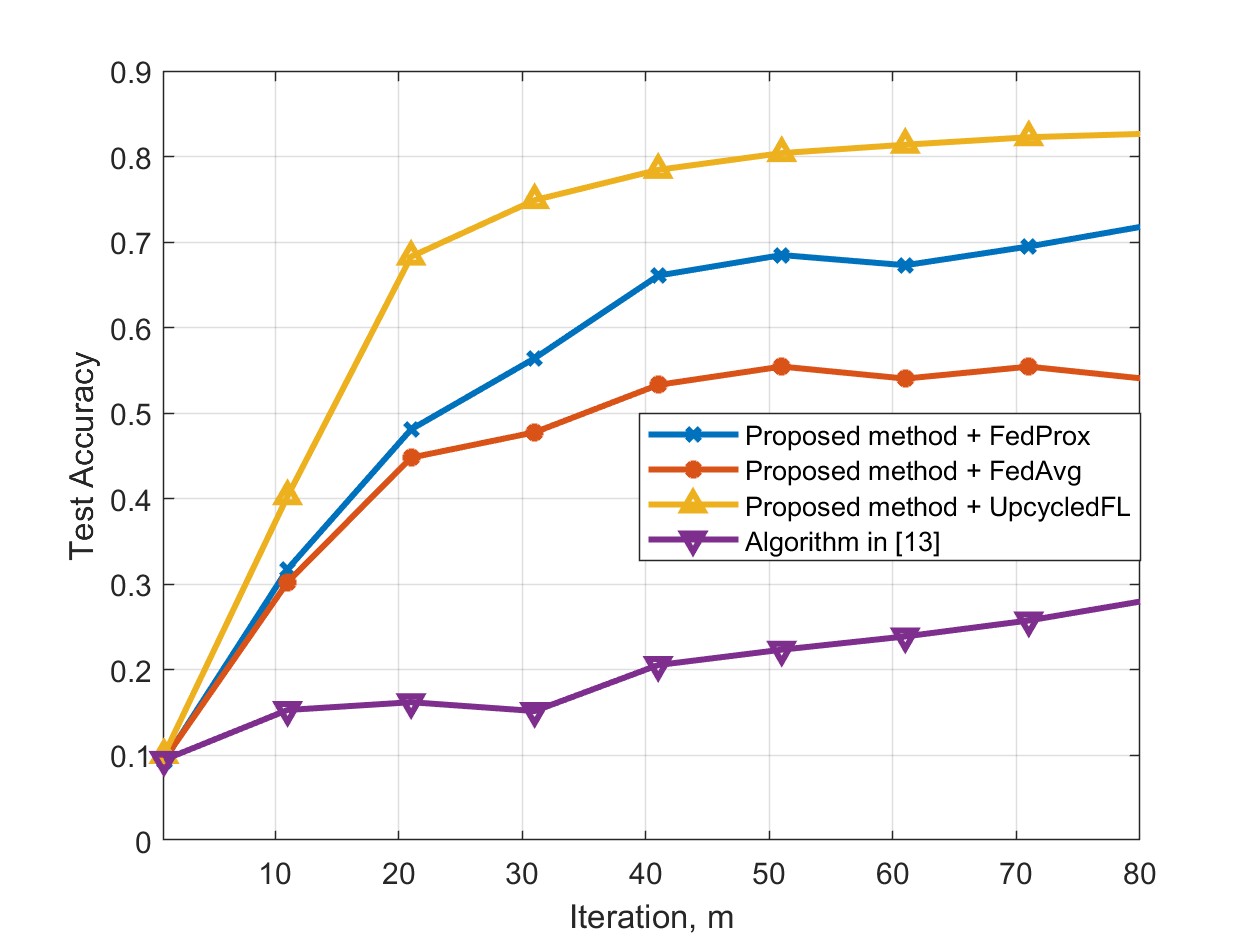}
    % \vspace{-0.1in}
    \caption{Test accuracy without cooperative jammer $(\varepsilon = 4.4, \delta = 0.01)$.}
    \label{fig:testacc_2}
    % \vspace{-0.1in}
\end{figure}
\begin{figure}[t]
    \centering
    \includegraphics[width=0.44\textwidth]{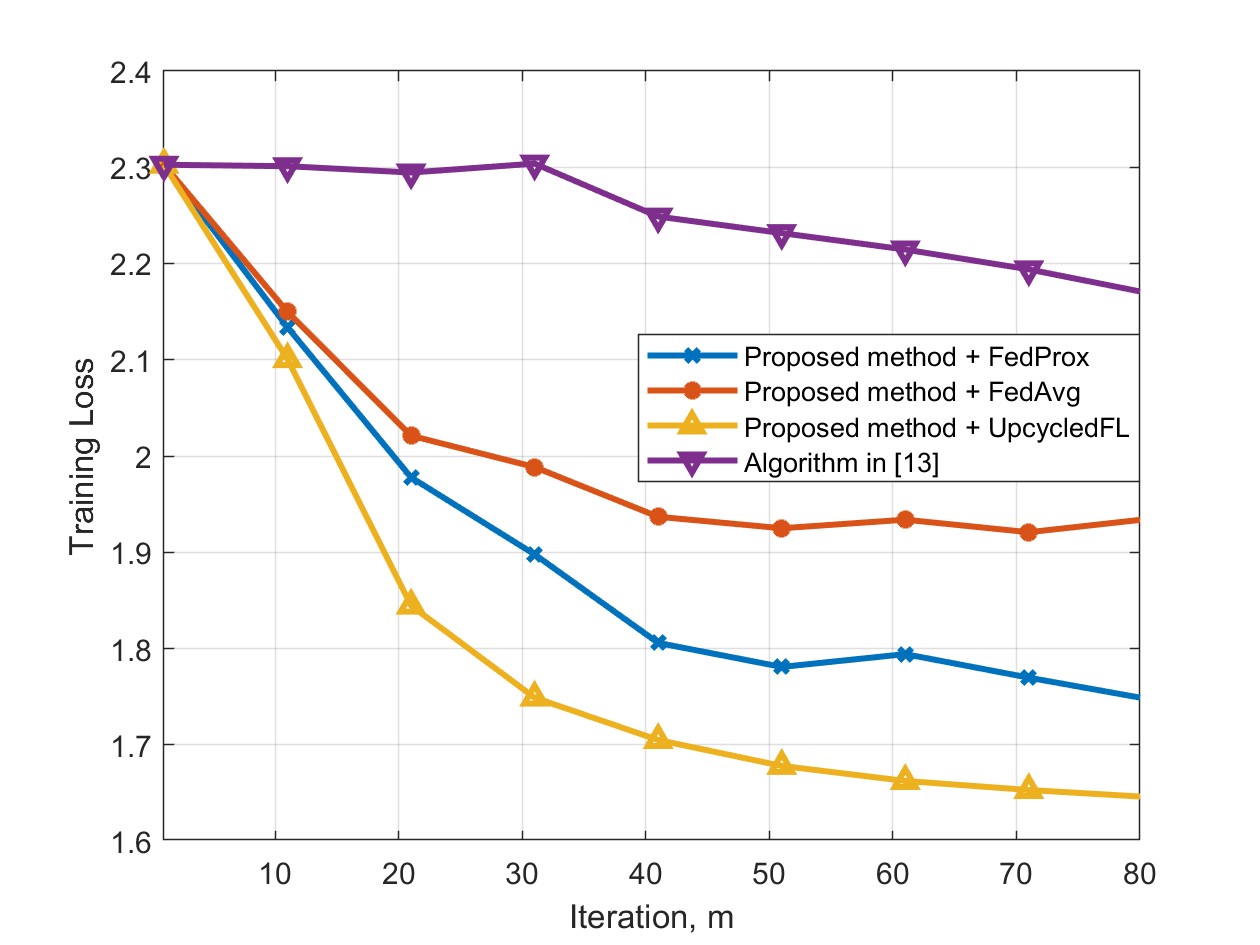}
    % \vspace{-0.15in}
    \caption{Training loss without cooperative jammer $(\varepsilon = 4.4, \delta = 0.01)$.}
    \label{fig:trainingloss_2}
    % \vspace{-0.2in}
\end{figure}

We perform image classification tasks using the FEMNIST dataset, a federated adaptation of the EMNIST dataset~\cite{cohen2017emnist}. FEMNIST comprises 62 classes, with each image having $28 \times 28$ pixels. From the EMNIST dataset, we select a subset consisting of 10 lowercase characters, ranging from `a' to `j'. These classes are distributed such that each client receives five, establishing a non-i.i.d. data distribution.
The experiments are conducted in 10 independent trials and we report the average results. For the classification tasks, we use a Multilayer Perceptron (MLP) employing a two-layer network architecture. This network includes a hidden layer with dimensions of $14 \times 14$.
We use a standard cross-entropy as loss function.

In our implementation of Algorithm~\ref{alg:upcycled-ota}, we employ SGD as the local optimizer, utilizing a momentum of 0.5. We set each training session to run for 20 local epochs, with a learning rate of $\eta=0.05$ and a proximal term of $\mu=0.1$. For a fair comparison, we also set the same proximal term, $\mu=0.1$, for FedProx. We adjust hyperparameters for the remaining baselines to achieve optimal performance.
In the Upcycled-FL framework, since even iterations are computed by the PS, we can double the number of iterations relative to other FL frameworks such as FedAvg and FedProx. Consequently, we execute 160 epochs for our Upcycled-FL method, while the baselines are run for 80 epochs. The parameter $\lambda_m$ of Upcycled-FL is set across different segments of the training as follows: it is set to 0.15 for $m \in [1,25]$, increases to 0.4 for $m \in [26,50]$, rises to 0.9 for $m \in [51,75]$, and peaks at 1.9 for $m \in [76,80]$.

We start by comparing the performance of Algorithm~\ref{alg:upcycled-ota}, and also those that incorporate our proposed power control design into FedAvg and FedProx  with the state-of-the-art in~\cite{liu2020privacy}.
To ensure a fair comparison, we exclude the cooperative jammer to maintain consistency in the system model.
In our evaluation, using privacy parameters $\delta=10^{-5}$ and $0.01$, our designs achieve privacy level $\varepsilon=6.52$ and $4.4$, respectively, both within reasonable range.
Subsequently, we apply the same DP settings to the algorithm in~\cite{liu2020privacy} to facilitate a comparative evaluation of learning performance, while maintaining consistency across all other parameters.
The performance metrics, illustrated in Fig.~\ref{fig:testacc_1} and Fig.~\ref{fig:trainingloss_1}, show test accuracy and training loss versus global iterations for the DP level $(\varepsilon=6.52, \delta=10^{-5})$. Similarly, Fig.~\ref{fig:testacc_2} and Fig.~\ref{fig:trainingloss_2} show the results for $(\varepsilon=4.4, \delta=0.01)$.
Results reveal that Algorithm~\ref{alg:upcycled-ota} significantly outperforms all, attributed to the integration of our adaptive power control within the Upcycled-FL framework. 
Notably, the algorithm with FedProx performs better than that with FedAvg, reflecting FedProx's enhanced capability to handle heterogeneous data distributions compared to FedAvg. 
Each of the proposed approach outperforms of the state-of-the-art algorithm~\cite{liu2020privacy}, demonstrating that our power control approach is effective.
The results also emphasize our method's efficacy in maintaining stringent DP levels while ensuring robust learning performance.

\begin{figure}[t]
    \centering
    \includegraphics[width=0.44\textwidth]{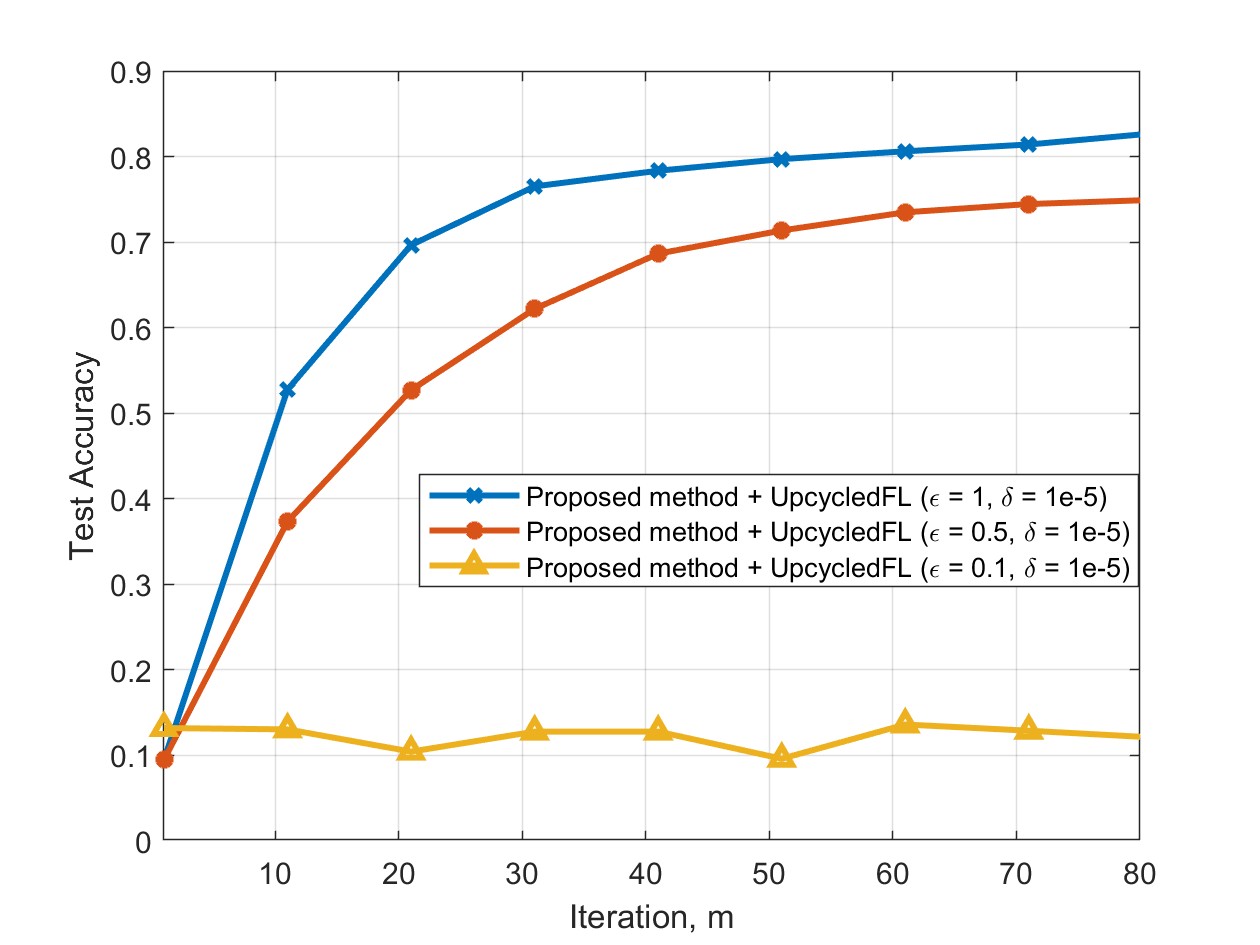}
    % \vspace{-0.1in}
    \caption{Test accuracy with cooperative jammer for Upcycled-FL.}
    \label{fig:testacc_3}
    \vspace{-4pt}
\end{figure}
\begin{figure}[t]
    \centering
    \includegraphics[width=0.44\textwidth]{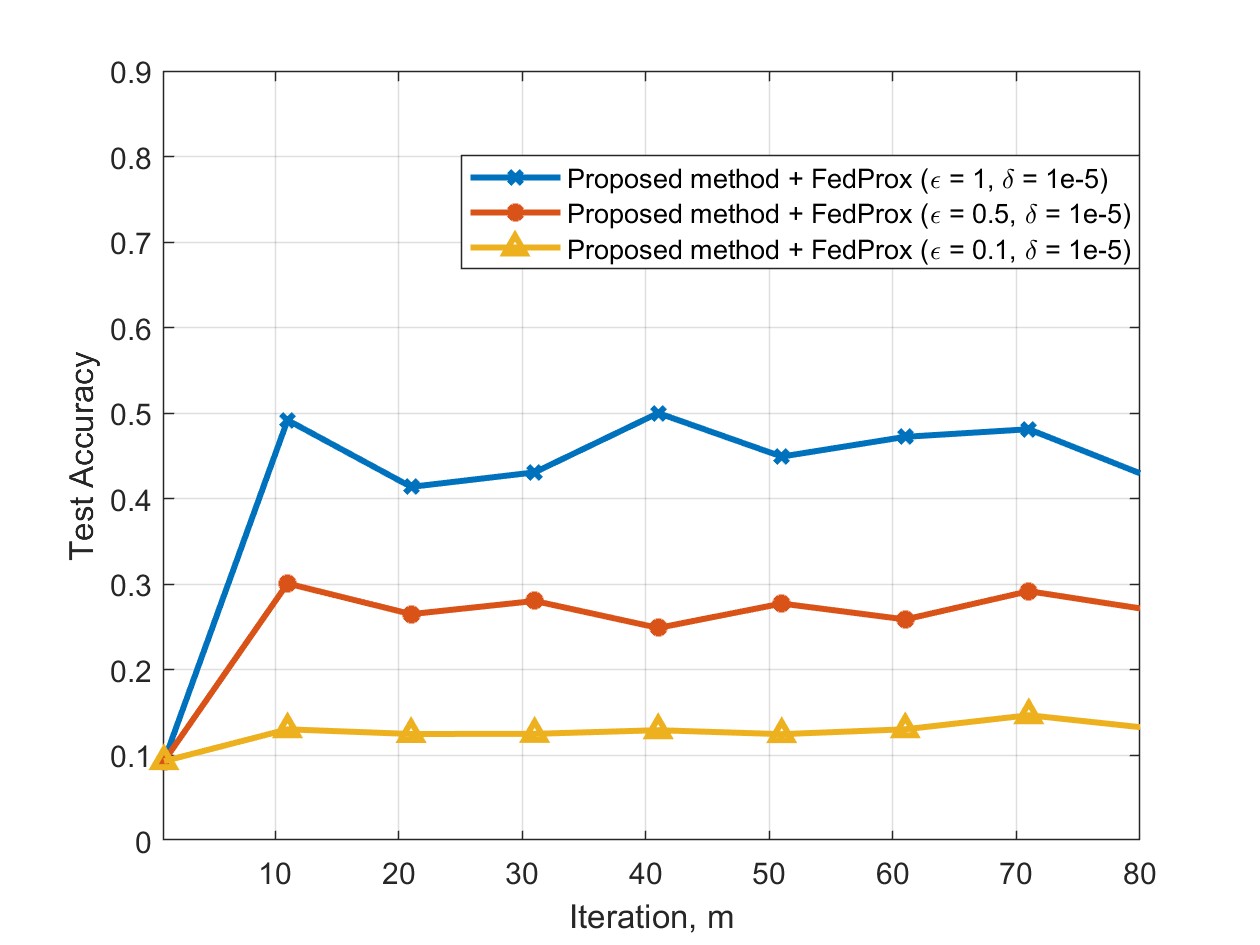}
    % \vspace{-0.1in}
    \caption{Test accuracy with cooperative jammer for FedProx.}
    \label{fig:testacc_4}
    % \vspace{-0.1in}
\end{figure}
\begin{figure}[t]
    \centering
    \includegraphics[width=0.44\textwidth]{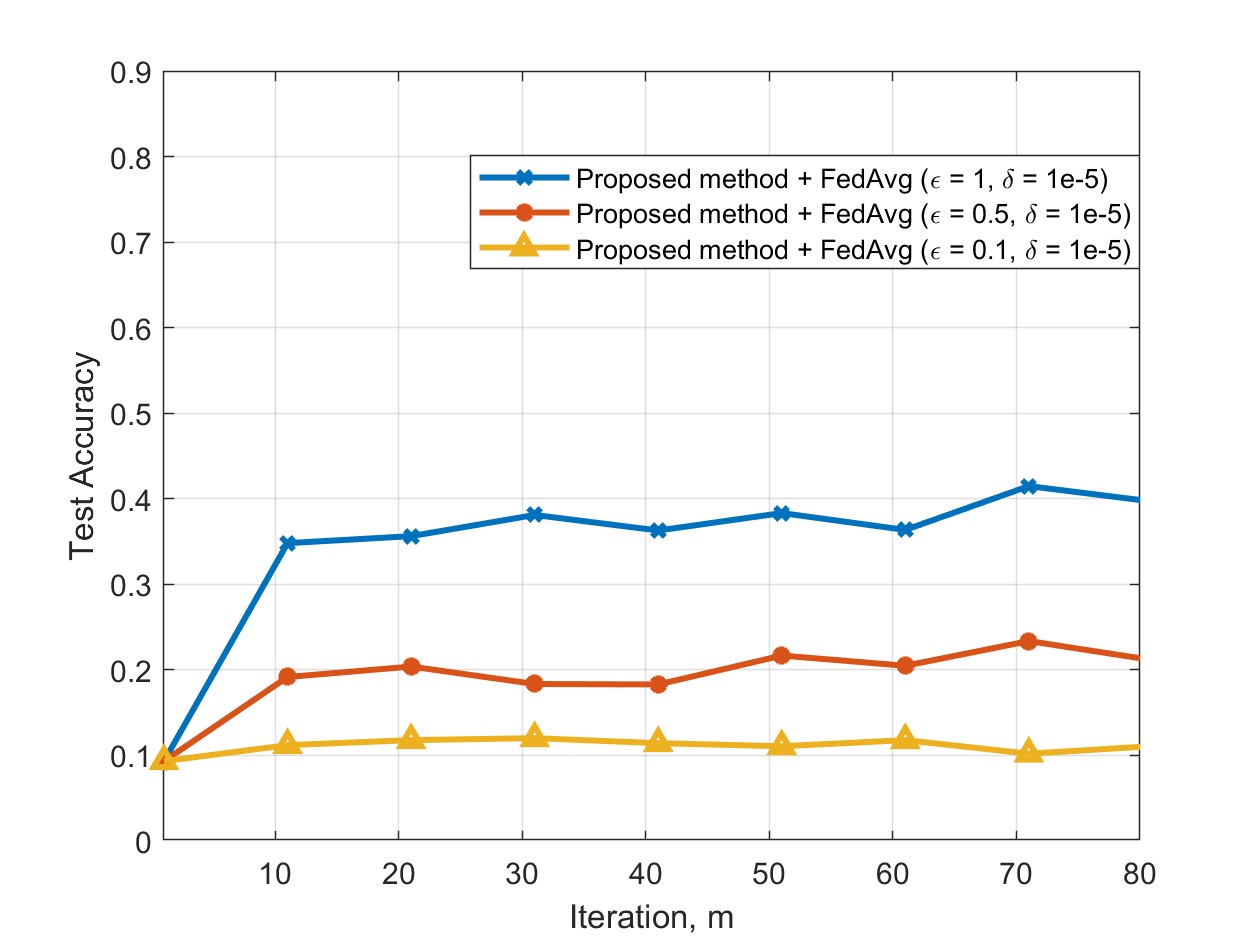}
    % \vspace{-0.1in}
    \caption{Test accuracy with cooperative jammer for FedAvg.}
    \label{fig:testacc_5}
    \vspace{-5pt}
\end{figure}

We next evaluate the convergence of our design incorporating a cooperative jammer under stricter privacy constraints, i.e., at smaller $\varepsilon$ values. 
We set $\delta$ at $10^{-5}$ and select $\varepsilon$ values of 1, 0.5, and 0.1, all of which are more stringent than the DP levels our algorithm achieves without CJ.
To more accurately depict the observed phenomenon, we apply a factor of 7 times the theoretical lower bound.
Figs.~\ref{fig:testacc_3},~\ref{fig:testacc_4}, and~\ref{fig:testacc_5} show the test accuracy for Upcycled-FL, FedProx, and FedAvg, respectively, across these varying DP settings. 
Consistent with our findings from scenarios without the CJ, Upcycled-FL continues to surpass both FedProx and FedAvg in performance.
It is evident that as $\varepsilon$ decreases, the increased noise adversely affects learning performance. Particularly at very low $\varepsilon$ values, such as 0.1, the excessive noise required to meet stringent privacy levels significantly impedes the algorithms' convergence.
This observation substantiates our theoretical insights into the privacy-accuracy trade-off, where higher privacy levels inherently lead to reduced accuracy.

\subsection{IID Data}
\label{subsec:exp-IID}
\begin{figure}[t]
    \centering
    \includegraphics[width=0.44\textwidth]{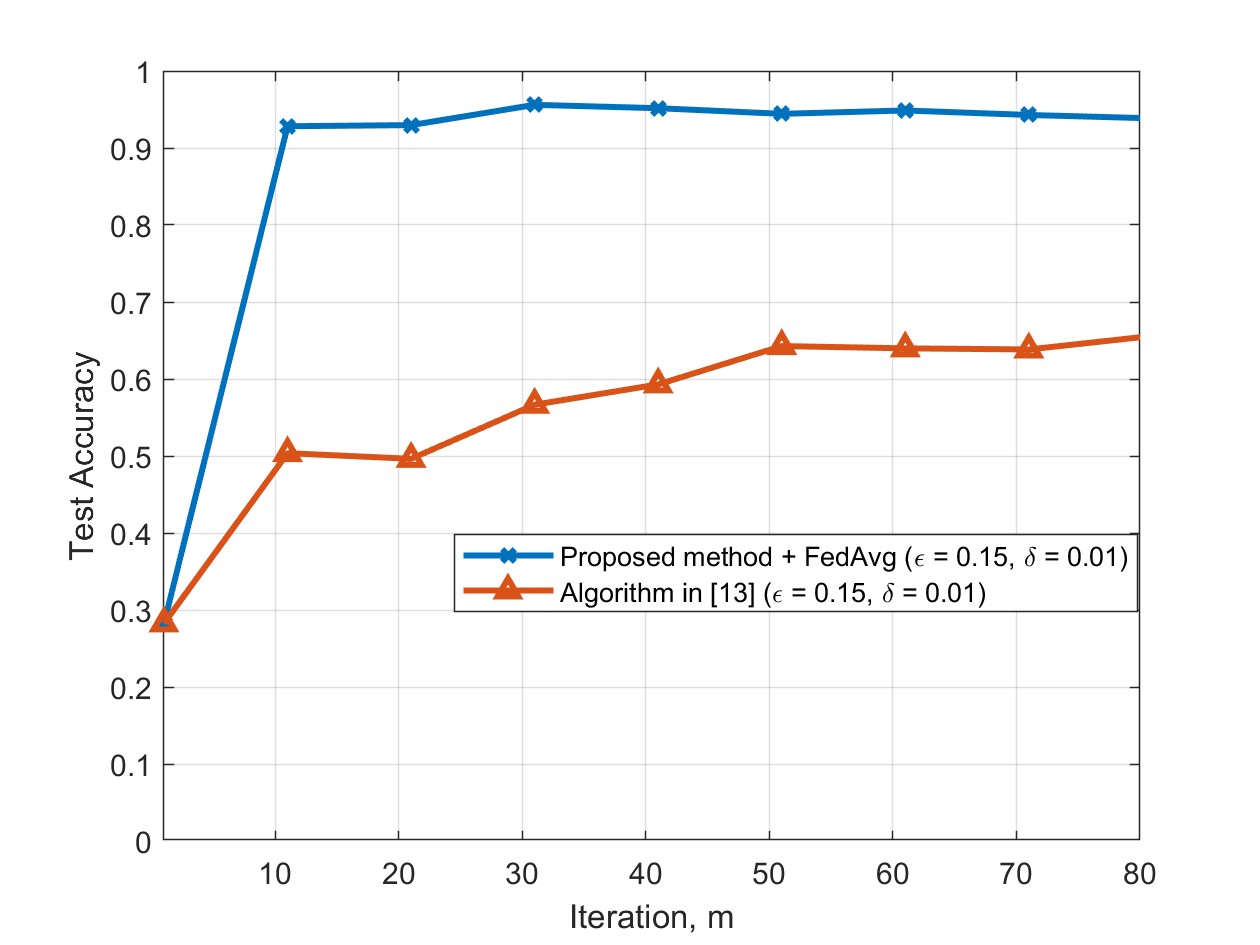}
    % \vspace{-0.1in}
    \caption{Test accuracy without cooperative jammer $(\varepsilon = 0.15, \delta = 0.01)$.}
    \label{fig:testacc_6}
    % \vspace{-0.1in}
\end{figure}
\begin{figure}[t]
    \centering
    \includegraphics[width=0.44\textwidth]{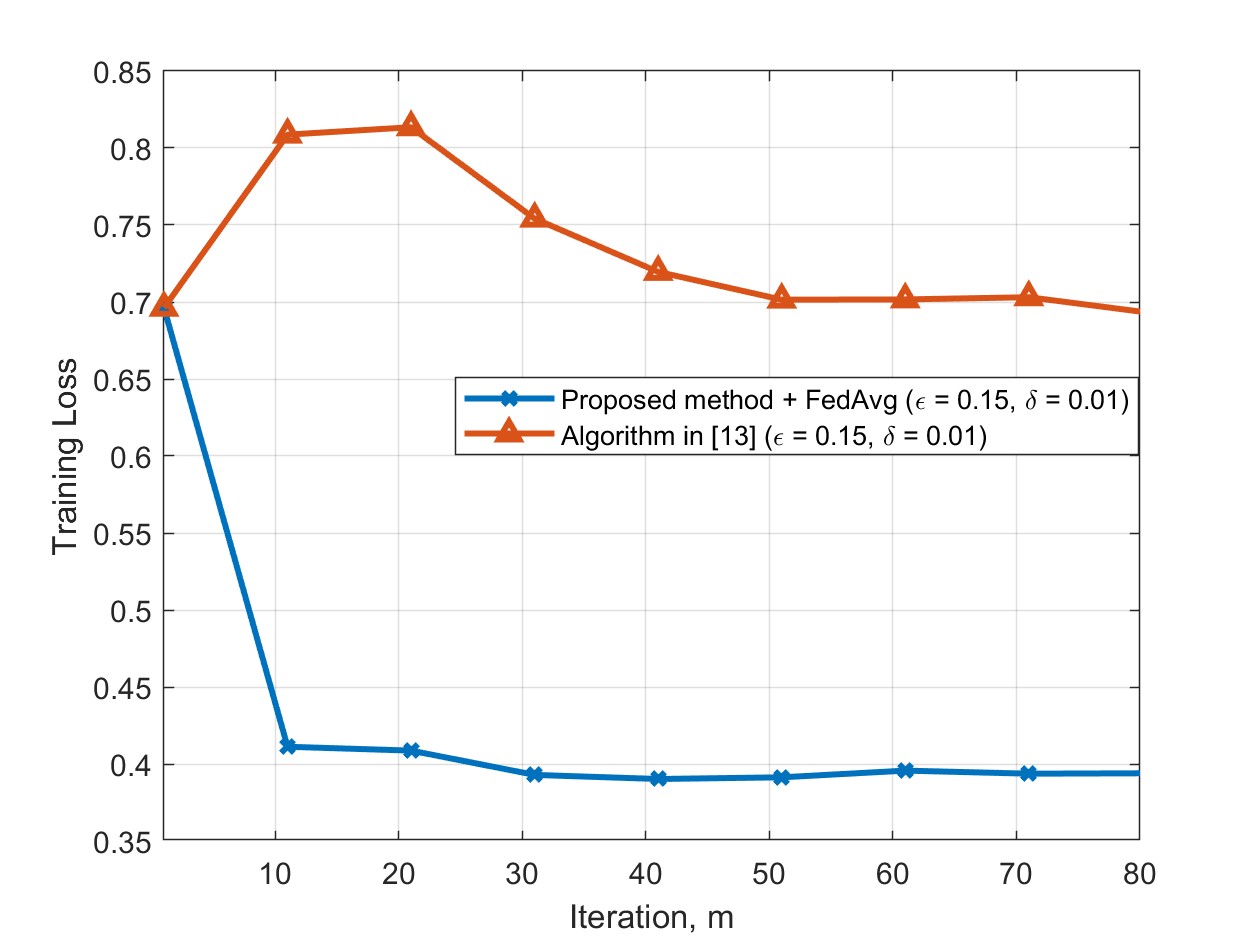}
    % \vspace{-0.1in}
    \caption{Training Loss without cooperative jammer $(\varepsilon = 0.15, \delta = 0.01)$.}
    \label{fig:trainingloss_3}
    % \vspace{-0.1in}
\end{figure}

We now shift our focus to an i.i.d data distribution for implementation with FedAvg. 
As previously noted, FedAvg is known to perform poorly with heterogeneous data distributions, prompting us to conduct experiments on i.i.d data to validate the efficacy of our power control strategy within the FedAvg framework. 
Following the approach in~\cite{li2020federated}, we generate a synthetic i.i.d dataset and maintain consistent hyperparameters with the baseline in~\cite{liu2020privacy} to ensure a fair comparison. We set $\delta=0.01$ and achieve a privacy budget $\varepsilon=0.15$ with our method, which we then apply to the baseline.
As illustrated in Fig. \ref{fig:testacc_6} and Fig. \ref{fig:trainingloss_3}, our proposed design significantly outperforms the state-of-the-art algorithm~\cite{liu2020privacy}. This not only demonstrates the effectiveness and robustness of our adaptive power control strategy but also confirms its seamless integration capability with existing FL frameworks.

% \vspace{-0.1in}
\section{Conclusion} 
\label{sec: conclusion}
% \vspace{-0.1in}
In this paper, we developed a decentralized dynamic power control strategy tailored for differentially private OTA-FL.
This approach utilizes a cooperative jammer as needed to achieve strict privacy requirements without compromising transmission efficiency. 
Our power control method is designed to facilitate easy integration across various FL frameworks.
While it was primarily implemented within the Upcycled-FL framework, we have also extended its application to FedAvg and FedProx.
We employed the Moments Accountant method and derived an artificial noise design for rigorous privacy assessments. 
We provided a convergence analysis on non-convex objectives and performed experiments on the non-i.i.d. real-world dataset FEMNIST.
The numerical results not only demonstrated our method's effectiveness compared to the state-of-the-art under the same DP conditions, but also emphasized the cooperative jammer's benefits in scenarios requiring higher privacy levels.
Future directions include integrating Reconfigurable Intelligent Surfaces~\cite{mao2023roar,mao2023ris, shi2024federated} to further enhance learning performance while adhering to privacy constraints, and considering energy efficiency metrics in providing privacy preserving FL.

% !TEX root = main.tex
\allowdisplaybreaks

% \section{Proof}
\section*{Appendix}
\label{sec:privacyproof}
\subsection{Proof of Theorem 1}
\label{sec:proof_thm_1}
Note: the proof follows the privacy analysis in Upcycled FL, i.e., appendix G proof of theorem C.1.

WLOG, consider the case when local device got updated in every iteration and the algorithm runs over 2M iterations in total. We will use the uppercase letters X and lowercase letters x to denote random variables and the corresponding realizations, and use PX (·) to denote its probability distribution. In appendix G, outputs of mechanism $\mathcal{M}$ are local models of each client. However, in over-the-air computation, we directly have a summation in the server side, the output of such mechanism is the global model update.

Device $i$ transmits signal $\x_i^{2m+1}$ at odd iteration $2m+1$:
\begin{equation}
    \x_i^{2m+1} = \alpha_i^{2m+1} (\w_i^{2m+1} - \Bar{\w}^{2m}) \nonumber
\end{equation}
Device $a$ only transmits artificial noise: $\alpha_{CJ}^{2m+1} \n_{CJ}^{2m+1}$, $\n_{CJ}^{2m+1} \sim \mathcal{N}(0,1)$.

Received signal at the parameter server: 
\begin{equation}
    \y^{2m+1} = \sum_{i \in \mathcal{I}} h^{2m+1}_i \x_i^{2m+1} + h^{2m+1}_{CJ} \alpha_{CJ}^{2m+1} \n_{CJ}^{2m+1} + \z^{2m+1} \nonumber
\end{equation}

Global model update at odd iteration $2m+1$:
\begin{align}
    &\Bar{\w}^{2m+1} - \Bar{\w}^{2m} = \frac{\y^{2m+1}}{\alpha^{2m+1}_u} = \frac{1}{\alpha^{2m+1}_u} \sum_{i \in \mathcal{I}} h^{2m+1}_i \x_i^{2m+1} \nonumber \\
    &+ \frac{1}{\alpha^{2m+1}_u} h^{2m+1}_{CJ} \alpha_{CJ}^{2m+1} \n_{CJ}^{2m+1} + \frac{\z^{2m+1}}{\alpha^{2m+1}_u} \nonumber\\
    &=\frac{1}{\alpha^{2m+1}_u} \sum_{i \in \mathcal{I}} h^{2m+1}_i \alpha_i^{2m+1} (\w_i^{2m+1} - \Bar{\w}^{2m}) \nonumber\\ 
    &+ \frac{h^{2m+1}_{CJ} \alpha_{CJ}^{2m+1}}{\alpha^{2m+1}_u}\n_{CJ}^{2m+1} + \Tilde{\z}^{2m+1}, \nonumber
\end{align}
where $\Tilde{\z}^{2m+1} \sim \mathcal{N}(0,\frac{\sigma_c^2}{(\alpha^{2m+1}_u)^2}\mathbf{I})$, $\sigma_c^2$ is the variance of channel noise.

Still, the privacy loss comes from the odd iterations. Now the mechanism output is $\Bar{\w}^{2m+1}$. For a mechanism $\mathcal{M}$ outputs $o$, with inputs $d$ and $\hat{d}$, let a random variable $c(o; \mathcal{M}, d,\hat{d})=\log \frac{Pr(\mathcal{M}(d)=o)}{Pr(\mathcal{M}(\hat{d})=o)}$ denote the privacy loss at $o$, and 
\begin{equation}
    \alpha_{\mathcal{M}} (\lambda) = \max_{d,\hat{d}} \log \Eb_{o \sim \Mc(d)} \{exp(\lambda c(o; \mathcal{M}, d,\hat{d}))\}. \nonumber
\end{equation}
For client $i$, consider two different datasets $\Dc_i$ and $\Dc_i'$, and datasets of other clients are fixed. The total privacy loss is only contributed by odd iterations. Thus, for the sequence output of global models generated by mechanisms $\{\Mc^m\}_{m=1}^{M}$ over $2M$ iterations, there is:
\begin{align}
    &c(\Bar{\w}^{0:2M};\{\Mc^m\}_{m=1}^{M}, \Dc_i, \Dc_i')  = \log \frac{P_{W^{0:2M}}(\Bar{\w}^{0:2M}|\Dc_i)}{P_{W^{0:2M}}(\Bar{\w}^{0:2M}|\Dc_i')} \nonumber\\
    &= \sum_{m=0}^M \log \frac{P_{W^{2m+1}}(\Bar{\w}^{2m+1}|\Dc_i, \Bar{\w}^{0:2m})}{P_{W^{2m+1}}(\Bar{\w}^{2m+1}|\Dc_i',\Bar{\w}^{0:2m})} + \log \frac{P_{W^0}(\w^0|\Dc_i)}{P_{W^0}(\w^0|\Dc_i')} \nonumber\\
    &= \sum_{m=0}^M c(\Bar{\w}^{2m+1};\Mc^m, \Bar{\w}^{0:2M}, \Dc_i, \Dc_i'),  \nonumber
\end{align}
where $P_{W^0}(\w^0|\Dc_i)=P_{W^0}(\w^0|\Dc_i')$ since $\w^0$ is randomly generated. Moreover, 
\begin{align}
    & \log \Eb_{o \sim \Mc(d)} \{exp(\lambda c(\Bar{\w}^{0:2m}; \{\Mc^m\}_{m=1}^{M}, \Dc_i,\Dc_i'))\} \nonumber \\
    &=  \sum_{m=0}^M \log \Eb_{\w^{2m+1}} \{ exp(\lambda c(\Bar{\w}^{2m+1}; \Mc^m, \Bar{\w}^{0:2m}, \Dc_i, \Dc_i')) \}, \nonumber
\end{align}
and $\alpha_{\{\Mc^m\}_{m=1}^M}(\lambda) \leq \sum_{m=1}^M \alpha_{\Mc^m}(\lambda)$ holds. We first bound $\alpha_{\Mc^m}(\lambda)$. 
\begin{align}
    &\Mc^m(\Dc_i) = \Bar{\w}^{2m+1} = \Bar{\w}^{2m} + \frac{h_i^{2m+1}\alpha_i^{2m+1}}{\alpha_u^{2m+1}} (\w_i^{2m+1} - \Bar{\w}^{2m}) \nonumber \\
    &+ \sum_{j \in \Ic, j\neq i} \frac{h_j^{2m+1} \alpha_j^{2m+1}}{\alpha_u^{2m+1}} (\w_j^{2m+1} - \Bar{\w}^{2m})  + \Tilde{\n}^{2m+1} \nonumber\\
    &=\Bar{\w}^{2m} + \sum_{j \in \Ic, j\neq i} \frac{h_j^{2m+1} \alpha_j^{2m+1}}{\alpha_u^{2m+1}} (\w_j^{2m+1} - \Bar{\w}^{2m}) \nonumber\\
    &+ \frac{h_i^{2m+1}\alpha_i^{2m+1}}{\alpha_u^{2m+1}} \frac{1}{|\Dc_i|} \sum_{d \in \Dc_i} \eta(d) + \Tilde{\n}^{2m+1}, \nonumber
\end{align}
where $\|\eta(\cdot)\|\leq \tau$ and $\Tilde{\n}^{2m+1}$ is effective Gaussian noise, $\Tilde{\n}^{2m+1} \sim \Nc(0, \sigma_{2m+1}^2 \I)$, $\sigma_{2m+1}^2= (\frac{\alpha_{CJ}^{2m+1} h_{CJ}^{2m+1}}{\alpha_u^{2m+1}})^2+\frac{\sigma_c^2}{(\alpha^{2m+1}_u)^2}$.
Without loss of generality, let $\Dc_i' = \Dc_i \cup \{d_n\}$, $\eta(d_n) = \pm \tau \e_1$ and $\sum_{d \in \Dc_i} \eta(d) = \mathbf{0}$. Consider same output of mechanism $\Mc^m$, $(\w_j^{2m+1} - \Bar{\w}^{2m})$ from other clients are same (their datasets are fixed), then the difference of $\Mc^m(\Dc_i)$ and $\Mc^m(\Dc_i')$ comes from one coordinate and the problem can be reduced to one-dimensional problem:
\begin{align}
    &\frac{h_i^{2m+1}\alpha_i^{2m+1}}{\alpha_u^{2m+1}} \frac{1}{|\Dc_i|} \sum_{d \in \Dc_i} \eta(d) + \Tilde{\n}^{2m+1} \nonumber\\
    &= \frac{h_i^{2m+1}\alpha_i^{2m+1}}{\alpha_u^{2m+1}} \frac{1}{|\Dc_i|} \sum_{d \in \Dc_i'} \eta(d) + \Tilde{\n'}^{2m+1} \nonumber\\
    & \Rightarrow \Tilde{\n}^{2m+1} = \pm \frac{h_i^{2m+1}\alpha_i^{2m+1}}{\alpha_u^{2m+1}} \frac{\tau}{|\Dc_i|} \e_1 + \Tilde{\n'}^{2m+1}. \nonumber
\end{align}
\begin{align}
    &c(\Bar{\w}^{2m+1}; \Mc^m, \Bar{\w}^{0:2m}, \Dc_i, \Dc_i') \nonumber\\
    &= \log \frac{P_{W^{2m+1}}(\Bar{\w}^{2m+1}|\Dc_i, \Bar{\w}^{0:2m})}{P_{W^{2m+1}}(\Bar{\w}^{2m+1}|\Dc_i',\Bar{\w}^{0:2m})} \nonumber\\
    &= \log \frac{P_N(\Tilde{n}^{2m+1})}{P_N(\Tilde{n}^{2m+1} \pm \frac{h_i^{2m+1}\alpha_i^{2m+1}}{\alpha_u^{2m+1}}\frac{\tau}{|\Dc_i|})} \nonumber\\
    & \leq \frac{\frac{h_i^{2m+1}\alpha_i^{2m+1}}{\alpha_u^{2m+1}} \tau}{2 |\Dc_i| \sigma_{2m+1}^2} (2|\Tilde{n}^{2m+1}|+ \frac{h_i^{2m+1}\alpha_i^{2m+1}}{\alpha_u^{2m+1}} \tau). \nonumber
\end{align}
Therefore,
\begin{equation}
    \alpha_{\Mc^m}(\lambda) = \frac{(\frac{h_i^{2m+1}\alpha_i^{2m+1}}{\alpha_u^{2m+1}} \tau)^2 \lambda (\lambda+1)}{2 |D_i|^2 \sigma_{2m+1}^2}, \nonumber
\end{equation}
\begin{equation}
    \alpha_{\{\Mc^m\}_{m=1}^M}(\lambda) \leq \sum_{m=1}^M \alpha_{\Mc^m}(\lambda) = \sum_{m=1}^M \frac{(\frac{h_i^{2m+1}\alpha_i^{2m+1}}{\alpha_u^{2m+1}} \tau)^2 \lambda (\lambda+1)}{2 |D_i|^2 \sigma_{2m+1}^2}. \nonumber
\end{equation}

% \begin{align}
%     &\sum_{m=1}^M \frac{(\frac{h_i^{2m+1}\alpha_i^{2m+1}}{\alpha_u^{2m+1}} \tau)^2 (2 \lambda +1)}{2 |D_i|^2 \sigma_{2m+1}^2} = \varepsilon \nonumber \\
%     \Rightarrow \quad & \lambda^* = \frac{1}{2} \left( \frac{\varepsilon}{\sum_{m=1}^M \frac{(\frac{h_i^{2m+1}\alpha_i^{2m+1}}{\alpha_u^{2m+1}} \tau)^2}{2 |D_i|^2 \sigma_{2m+1}^2}} -1 \right) \geq 0 \nonumber\\
%     \Rightarrow \quad & \varepsilon \geq \sum_{m=1}^M \frac{(\frac{h_i^{2m+1}\alpha_i^{2m+1}}{\alpha_u^{2m+1}} \tau)^2}{2 |D_i|^2 \sigma_{2m+1}^2} \nonumber
% \end{align}
% Then we get the $\delta$:
% \begin{equation}
%     \delta = \exp \left(\sum_{m=1}^M \frac{(\frac{h_i^{2m+1}\alpha_i^{2m+1}}{\alpha_u^{2m+1}} \tau)^2}{2 |D_i|^2 \sigma_{2m+1}^2} \lambda^* (\lambda^*+1) - \lambda^* \varepsilon \right). \nonumber
% \end{equation}

Now we substitute the power control parameters in \eqref{equ: PC_client}. For each client $i$, set $\alpha_i$ as:
\begin{equation}
    \alpha_i^{2m+1} = \frac{\alpha_u^{2m+1} p_i }{h_i^{2m+1} \tau s_i^{2m+1}}, \nonumber
\end{equation}
where $p_i = \frac{|\Dc_i|}{\sum_{j \in \Ic}|\Dc_j|}$ is the proportion of client $i$, $s_i^{2m+1}$ is a hyper-parameter to satisfy transmit power constraint $\Eb \|\x_i^{2m+1}\| \leq P_i$. Specifically, we can find it in each odd iteration as following
\begin{align}
    &\Eb \|\x_i^{2m+1}\|_2^2 = \Eb \| \alpha_i^{2m-1} (\w_i^{2m-1} - \Bar{\w}^{2m})\|_2^2 \nonumber\\
    &= \left\|\frac{\alpha_u^{2m+1} p_i }{h_i^{2m+1} \tau s_i^{2m+1}}\right\|_2^2 \Eb \|\nabla g_i(\w_i^{2m-1})\|_2^2 \nonumber\\
    &\leq \left\|\frac{\alpha_u^{2m+1} p_i }{h_i^{2m+1} \tau s_i^{2m+1}}\right\|_2^2 * \tau^2 =\left\|\frac{\alpha_u^{2m+1} p_i }{h_i^{2m+1}s_i^{2m+1}}\right\|_2^2 \leq P_i. \nonumber
\end{align}
where we assume $\Eb \|\nabla g_i(\w_i^{2m-1})\|_2^2 \leq \tau$, which can be achieved by clipping the update in practice. Then use the tail bound in Theorem 2 \cite{abadi2016deep}, for any $\delta \in [0,1]$, the algorithm is $(\varepsilon,\delta)$-differential private for 
\begin{align}
    \varepsilon & = \min_{\lambda:\lambda \geq0} h_1(\lambda) = \min_{\lambda:\lambda \geq0} \frac{1}{\lambda}\alpha_{\Mc^m}(\lambda) + \frac{1}{\lambda} \log \left(\frac{1}{\delta}\right) \nonumber\\
    &= \min_{\lambda:\lambda \geq0} \sum_{m=1}^M \frac{(\frac{h_i^{2m+1}\alpha_i^{2m+1}}{\alpha_u^{2m+1}} \tau)^2 (\lambda+1)}{2 |D_i|^2 \sigma_{2m+1}^2} + \frac{1}{\lambda} \log \left(\frac{1}{\delta} \right). \nonumber
\end{align}
Take derivative of $h_1(\lambda)$ and set it 0, we can get

\begin{align}
    &\varepsilon = 2 \sqrt{\sum_{m=1}^M \frac{(\frac{h_i^{2m+1}\alpha_i^{2m+1}}{\alpha_u^{2m+1}} \tau)^2}{2 |D_i|^2 \sigma_{2m+1}^2}\log \left(\frac{1}{\delta}\right)} + \sum_{m=1}^M \frac{(\frac{h_i^{2m+1}\alpha_i^{2m+1}}{\alpha_u^{2m+1}} \tau)^2}{2 |D_i|^2 \sigma_{2m+1}^2} \nonumber \\
    & = 2 \sqrt{ \frac{p_i^2}{2|\Dc_i|^2} \sum_{m=1}^M \frac{1}{(s_i^{2m+1})^2\sigma_{2m+1}^2} \log \left(\frac{1}{\delta}\right)} \nonumber \\
    &+ \frac{p_i^2}{2|\Dc_i|^2} \sum_{m=1}^M \frac{1}{(s_i^{2m+1})^2\sigma_{2m+1}^2} \nonumber \\
    & \leq 2 \sqrt{ \frac{p_i^2}{2|\Dc_i|^2} \sum_{m=1}^M \frac{1}{\sigma_{2m+1}^2} \log \left(\frac{1}{\delta}\right)} + \frac{p_i^2}{2|\Dc_i|^2} \sum_{m=1}^M \frac{1}{\sigma_{2m+1}^2} \nonumber \\
    & = 2 \sqrt{ \frac{1}{2|\Dc|^2} \sum_{m=1}^M \frac{1}{\sigma_{2m+1}^2} \log \left(\frac{1}{\delta}\right)} + \frac{1}{2|\Dc|^2} \sum_{m=1}^M \frac{1}{\sigma_{2m+1}^2}, \nonumber
\end{align}
where $|\Dc| = \sum_{i \in \Ic} |\Dc_i|$ is the total number of training data points.
Note that we assign $s_i^{2m+1} \geq 1$ to ensure that the privacy level $\epsilon$ is bounded while the transmit power constraint is satisfied at the same time.
% \\
% \textbf{Case 1}: No artificial noise, just channel noise exists in system. Then our design stops here, we get the $(\varepsilon,\delta)$ privacy as above, where $\sigma_{2m+1}^2=\frac{\sigma_c^2}{(\alpha^{2m-1}_u)^2}$. In this case, we get $(\epsilon_i, \delta)$-differential privacy for each client $i$, and we can further use the worst case as a global privacy level $(\epsilon, \delta) = (\max \epsilon_i, \delta)$.

\subsection{Proof of Theorem 2}
\label{sec:proof_thm_2}
Theorem 1 proves the privacy level with only natural channel noise. In this section, we consider the case that CJ sends the artificial noise to the server to achieve any pair of $(\varepsilon, \delta)$, similar to the Theroem 1 in~\cite{abadi2016deep}.
To ensure the $(\varepsilon, \delta)$-differential privacy for all clients, we make the upper bound~\eqref{inequ:privacy_bound} in Theorem~\ref{thm:privacy} be smaller than the required $\varepsilon$. Define an auxiliary variable 
\begin{equation}
    x = \sqrt{ \frac{1}{2|\Dc|^2} \sum_{m=1}^M \frac{1}{(\sigma^{2m+1})^2} \log \left(\frac{1}{\delta}\right)} \nonumber
\end{equation}
Then the goal is to find the solution for inequality
\begin{align}
    & 2x + \frac{x^2}{\log \left(\frac{1}{\delta}\right)} \leq \varepsilon \nonumber\\
    \Rightarrow \quad & x^2 + 2 \log \left(\frac{1}{\delta}\right) x - \varepsilon \log \left(\frac{1}{\delta}\right) \leq 0 \nonumber\\
    \Rightarrow \quad & 0 < x \leq - \log \left(\frac{1}{\delta}\right) +  \sqrt{\left(\log \left(\frac{1}{\delta}\right) \right)^2 + \varepsilon \log \left(\frac{1}{\delta}\right)}.  \nonumber
\end{align}
Define
\begin{equation}
    a = - \log \left(\frac{1}{\delta}\right) +  \sqrt{\left(\log \left(\frac{1}{\delta}\right) \right)^2 + \varepsilon \log \left(\frac{1}{\delta}\right)}. \nonumber
\end{equation}
Then $x^2 \leq a^2$, let $(\sigma^{2m+1})^2 = \sigma^2$,
\begin{align}
    &\frac{1}{2|\Dc|^2} \sum_{m=1}^M \frac{1}{(\sigma^{2m+1})^2} \log \left(\frac{1}{\delta}\right) \leq a^2, \nonumber\\
    & \sigma^2 \geq \frac{M}{2|\Dc|^2 a^2} \log \left(\frac{1}{\delta}\right) \Rightarrow \nonumber\\
    & \left( \frac{\alpha_{CJ}^{2m+1}h_{CJ}^{2m+1}}{\alpha_u^{2m+1}}\right)^2 = \sigma^2 - \frac{\sigma_c^2}{\alpha_u^{2m+1}} \nonumber\\
    & \geq \frac{M}{2|\Dc|^2 a^2} \log \left(\frac{1}{\delta}\right) - \frac{\sigma_c^2}{\alpha_u^{2m+1}}.  \nonumber
\end{align}
Now we get the design
\begin{equation}
    \alpha^{2m+1}_{CJ} \geq \frac{\alpha^{2m+1}_u}{h^{2m+1}_{CJ}} \sqrt{ \frac{M}{2|\Dc|^2 a^2} \log \frac{1}{\delta} - \frac{\sigma_{c}^2}{(\alpha^{2m+1}_u)^2}}. \nonumber
\end{equation}

\subsection{Proof of Theorem 3}
\label{sec:convgproof}
Update rule of learning process:
\begin{align}
    &\w_i^{2m-1} = \Bar{\w}^{2m-2} - \frac{1}{\mu} \nabla F_i(\w_i^{2m-1}), \nonumber\\
    &\Bar{\w}^{2m-1} = \Bar{\w}^{2m-2} + \frac{1}{\alpha_u} \sum_{i \in \Ic} h_i^{2m-1} \alpha_i^{2m-1} (\w_i^{2m-1} -\Bar{\w}^{2m-2}) \nonumber\\
    &+ \frac{h_{CJ}^{2m-1} \alpha_{CJ}^{2m-1}}{\alpha_u} \n_{CJ}^{2m-1} + \Tilde{\z}^{2m-1}, \nonumber\\
    &\Bar{\w}^{2m} = \Bar{\w}^{2m-1} + \frac{\mu}{\mu+\lambda_m} (\Bar{\w}^{2m-1} - \Bar{\w}^{2m-2}), \nonumber \\
    &\Bar{\w}^{2m+1} - \Bar{\w}^{2m-1} = \frac{\mu}{\mu+\lambda_m} (\Bar{\w}^{2m-1} - \Bar{\w}^{2m-2})  + \Tilde{\z}^{2m-1}  \nonumber \\
    &   - \frac{1}{\mu} \sum_{i \in \Ic} \frac{h_i^{2m-1} \alpha_i^{2m-1}}{\alpha_u} \nabla F_i (\w_i^{2m+1}) + \frac{h_{CJ}^{2m-1} \alpha_{CJ}^{2m-1}}{\alpha_u} \n_{CJ}^{2m-1}. \nonumber
\end{align}
Take expectation conditioned on $\w^{2m-1}$, the randomness is from the channel noise and artificial noise.
\begin{align}
    &\Eb[f(\Bar{\w}^{2m+1})] \leq f(\Bar{\w}^{2m-1}) + \frac{L}{2} \Eb \|\Bar{\w}^{2m+1} - \Bar{\w}^{2m-1}\|^2  \nonumber\\
    &+ \Eb <\nabla f(\Bar{\w}^{2m-1}), \Bar{\w}^{2m+1} - \Bar{\w}^{2m-1}> \nonumber\\
    =& f(\Bar{\w}^{2m-1}) + \Eb <\nabla f(\Bar{\w}^{2m-1}),\frac{\mu}{\mu+\lambda_m} (\Bar{\w}^{2m-1} - \Bar{\w}^{2m-2}) \nonumber \\
    &- \frac{1}{\mu} \sum_{i \in \Ic}\frac{h_i^{2m+1} \alpha_i^{2m+1}}{\alpha_u}\nabla F_i (\w_i^{2m+1}) \nonumber \\
    &+\frac{h_{CJ}^{2m+1} \alpha_{CJ}^{2m+1}}{\alpha_u} \n_{CJ}^{2m+1} + \Tilde{\z}^{2m+1}> \nonumber\\
    & + \frac{L}{2} \Eb \|\frac{\mu}{\mu+\lambda_m} (\Bar{\w}^{2m-1} - \Bar{\w}^{2m-2}) + \Tilde{\z}^{2m+1} \nonumber \\
    &- \frac{1}{\mu} \sum_{i \in \Ic}\frac{h_i^{2m+1} \alpha_i^{2m+1}}{\alpha_u}\nabla F_i (\w_i^{2m+1}) +\frac{h_{CJ}^{2m+1} \alpha_{CJ}^{2m+1}}{\alpha_u} \n_{CJ}^{2m+1}  \|^2 \nonumber\\
    =& f(\Bar{\w}^{2m-1}) + <\nabla f(\Bar{\w}^{2m-1}),\frac{\mu}{\mu+\lambda_m} (\Bar{\w}^{2m-1} - \Bar{\w}^{2m-2})> \nonumber\\
    &+ <\nabla f(\Bar{\w}^{2m-1}), - \frac{1}{\mu} \sum_{i \in \Ic} \frac{p_i}{\tau s_i^{2m+1}} \nabla F_i(\w_i^{2m+1})> \nonumber\\
    &+ \frac{L}{2} \frac{\sigma_c^2 d}{\alpha_u^2} + \frac{L}{2} \Eb \|\frac{\mu}{\mu+\lambda_m} (\Bar{\w}^{2m-1} - \Bar{\w}^{2m-2}) \nonumber \\
    &- \frac{1}{\mu} \sum_{i \in \Ic} \frac{p_i}{\tau s_i^{2m+1}} \nabla F_i(\w_i^{2m+1}) \|^2 + \frac{Ld}{2} \left( \frac{h_{CJ}^{2m+1} \alpha_{CJ}^{2m+1}}{\alpha_u}\right)^2  \nonumber\\
    \leq & f(\Bar{\w}^{2m-1}) + \frac{\mu}{\mu+\lambda_m} \|\nabla f(\w^{2m-1})\| \|\Bar{\w}^{2m-1}-\Bar{\w}^{2m-2}\| \nonumber \\
    &- \frac{1}{\mu} <\nabla f(\w^{2m-1}),\sum_{i \in \Ic} \frac{p_i}{\tau s_i^{2m+1}} \nabla F_i(\w_i^{2m+1})> \nonumber\\
    & + L \left(\frac{\mu}{\mu+\lambda_m}\right)^2 \|\Bar{\w}^{2m-1}-\Bar{\w}^{2m-2}\|^2 + \frac{L}{2} \frac{\sigma_c^2 d}{\alpha_u^2} \nonumber \\
    &+ \frac{L}{\mu^2} \left\|\sum_{i \in \Ic} \frac{p_i}{\tau s_i^{2m+1}} \nabla F_i(\w_i^{2m+1})\right\|^2 +\frac{Ld}{2} \left( \frac{h_{CJ}^{2m+1} \alpha_{CJ}^{2m+1}}{\alpha_u}\right)^2.    \nonumber
\end{align}
Note that 
\begin{align}
    &- \frac{1}{\mu} <\nabla f(\w^{2m-1}),\sum_{i \in \Ic} \frac{p_i}{\tau s_i^{2m+1}} \nabla F_i(\w_i^{2m+1})>  \nonumber\\
    &= \frac{1}{2\mu} \bigg( \left\| \nabla f(\w^{2m-1}) - \sum_{i \in \Ic} \frac{p_i}{\tau s_i^{2m+1}} \nabla F_i(\w_i^{2m+1}) \right\|^2  \nonumber\\
    &- \|\nabla f(\w^{2m-1})\|^2- \left\| \sum_{i \in \Ic}\frac{p_i}{\tau s_i^{2m+1}} \nabla F_i(\w_i^{2m+1}) \right\|^2\bigg) \nonumber
\end{align}
Thus,
\begin{align}
    &\Eb[f(\Bar{\w}^{2m+1})] \leq f(\Bar{\w}^{2m-1})+ L \frac{\mu^2}{(\mu+\lambda_m)^2}\|\Bar{\w}^{2m-1}-\Bar{\w}^{2m-2}\|^2  \nonumber \\
    & + \frac{\mu}{\mu+\lambda_m} \|\nabla f(\w^{2m-1})\| \|\Bar{\w}^{2m-1}-\Bar{\w}^{2m-2}\|\nonumber\\
    & + \frac{Ld}{2} \left( \frac{h_{CJ}^{2m+1} \alpha_{CJ}^{2m+1}}{\alpha_u}\right)^2 + \frac{L}{2} \frac{\sigma_c^2 d}{\alpha_u^2} - \frac{1}{2 \mu} \|\nabla f(\Bar{\w}^{2m-1}) \|^2 \nonumber\\
    & + \frac{1}{2\mu} \left\| \nabla f(\w^{2m-1}) - \sum_{i \in \Ic} \frac{p_i}{\tau s_i^{2m+1}} \nabla F_i(\w_i^{2m+1}) \right\|^2  \nonumber\\
    &+ \left(\frac{L}{\mu^2} - \frac{1}{2 \mu} \right) \left\| \sum_{i \in \Ic}\frac{p_i}{\tau s_i^{2m+1}} \nabla F_i(\w_i^{2m+1}) \right\|^2 \label{inequ: partP}
\end{align}
Next, we will bound the last two terms. By Jensen's inequality and $F_i$ is $L$-smooth,
\begin{align}
    &\frac{1}{2\mu} \left\| \nabla f(\Bar{\w}^{2m-1}) - \sum_{i \in \Ic} \frac{p_i}{\tau s_i^{2m+1}} \nabla F_i(\w_i^{2m+1}) \right\|^2 \nonumber\\
    =& \|\sum_{i \in \Ic} p_i \nabla F_i(\Bar{\w}^{2m-1}) - \sum_{i \in \Ic} p_i \nabla F_i(\w_i^{2m+1}) \nonumber \\
    &+\sum_{i \in \Ic} p_i \nabla F_i(\w_i^{2m+1}) - \sum_{i \in \Ic} \frac{p_i}{\tau s_i^{2m+1}} \nabla F_i(\w_i^{2m+1}) \|^2 \nonumber\\
    =&  \frac{1}{2\mu} \| \Eb_i[\nabla F_i(\Bar{\w}^{2m-1}) - \nabla F_i(\w_i^{2m+1})] \nonumber \\
    &+ \Eb_i\left(1 - \frac{1}{\tau s_i^{2m+1}} \right) \nabla F_i(\w_i^{2m+1}) \|^2 \nonumber\\
    \leq & \frac{1}{\mu} \|\Eb_i[\nabla F_i(\Bar{\w}^{2m-1}) - \nabla F_i(\w_i^{2m+1})]\|^2 \nonumber\\
    &+ \frac{1}{\mu} \left\|\Eb_i\left(1 - \frac{1}{\tau s_i^{2m+1}} \right) \nabla F_i(\w_i^{2m+1})\right\|^2 \nonumber \\
    \leq & \frac{1}{\mu} \Eb_i\|\nabla F_i(\Bar{\w}^{2m-1}) - \nabla F_i(\w_i^{2m+1})\|^2 \nonumber \\
    &+ \frac{1}{\mu} \Eb_i\left\|\left(1 - \frac{1}{\tau s_i^{2m+1}} \right) \nabla F_i(\w_i^{2m+1})\right\|^2 \nonumber \\
    \leq & \frac{L}{\mu} \Eb_i \|\w_i^{2m+1}-\Bar{\w}^{2m-1}\|^2 \nonumber \\
    &+ \frac{1}{\mu}\Eb_i (1 - \frac{1}{\tau s_i^{2m+1}})^2 \|\nabla F_i(\w_i^{2m+1})\|^2. \label{inequ: partP_2}
\end{align}
Apply Jensen's inequality to \eqref{inequ: partP},
\begin{align}
   &\left(\frac{L}{\mu^2} - \frac{1}{2 \mu} \right) \left\| \sum_{i \in \Ic}\frac{p_i}{\tau s_i^{2m+1}} \nabla F_i(\w_i^{2m+1}) \right\|^2 \nonumber\\
    \leq & \left(\frac{L}{\mu^2} - \frac{1}{2 \mu} \right) \Eb_i \left(\frac{1}{\tau s_i^{2m+1}}\right)^2 \|\nabla F_i(\w_i^{2m+1})\|^2. \label{inequ: partP_3}
\end{align}
Apply Assumption \ref{a_bounds} and Lemma \ref{lemma_boundg}, we have
\begin{align}
    \|\nabla F_i(\w_i^{2m+1}) \|^2 \leq (\kappa_i + \|\nabla f(\w_i^{2m+1})\|)^2 \leq (\kappa_i + G)^2. \nonumber
\end{align}
Due to Assumption~\ref{a_smooth} and~\ref{a_strongconv}, we have
\begin{align}
    &\|\w_i^{2m+1}-\Bar{\w}^{2m-1}\| \leq \|\w_i^{2m+1}-\Bar{\w}^{2m}\| + \|\Bar{\w}^{2m} - \Bar{\w}^{2m-1}\| \nonumber\\
    & \leq \frac{L + \rho}{\rho} \|\Bar{\w}^{2m} - \Bar{\w}^{2m-1}\| + \frac{1}{\rho} \|\nabla F_i(\Bar{\w}^{2m-1}) \| \nonumber
\end{align}
Then continue to bound \eqref{inequ: partP_2}, \eqref{inequ: partP_3}
\begin{align}
    &\frac{1}{2\mu} \left\| \nabla f(\w^{2m-1}) - \sum_{i \in \Ic} \frac{p_i}{\tau s_i^{2m+1}} \nabla F_i(\w_i^{2m+1}) \right\|^2 \nonumber \\
    &+ \left(\frac{L}{\mu^2} - \frac{1}{2 \mu} \right) \left\| \sum_{i \in \Ic}\frac{p_i}{\tau s_i^{2m+1}} \nabla F_i(\w_i^{2m+1}) \right\|^2 \nonumber\\
    \leq & \frac{L}{\mu} \Eb_i \left(\frac{L+\rho}{\rho} \|\Bar{\w}^{2m} - \Bar{\w}^{2m-1}\| + \frac{1}{\rho} \|\nabla F_i(\Bar{\w}^{2m-1}) \|\right)^2 \nonumber \\
    &+ \Eb_i [\bigg( \frac{2\mu(\tau s_i^{2m+1}-1)^2 + (2L-\mu)\mu }{2 \mu^2 (\tau s_i^{2m+1})^2}\bigg) (\kappa_i + d)^2] \nonumber\\
    \leq & \frac{L}{\mu} \left(\frac{L+\rho}{\rho} \right)^2 \|\Bar{\w}^{2m} - \Bar{\w}^{2m-1}\|^2 \nonumber \\
    &+ \frac{2L}{\mu}\Eb_i \left[\frac{L+\rho}{\rho}\|\Bar{\w}^{2m} - \Bar{\w}^{2m-1}\| \frac{1}{\rho} \|\nabla F_i(\Bar{\w}^{2m-1}) \|\right] \nonumber\\
    & + \frac{L}{\mu \rho^2} \Eb_i\|\nabla F_i(\Bar{\w}^{2m-1})\|^2  \nonumber \\
    &+ \Eb_i \left[\left( \frac{2\mu(\tau s_i^{2m+1}-1)^2 + (2L-\mu)\mu }{2 \mu^2 (\tau s_i^{2m+1})^2}\right) (\kappa_i + d)^2\right] \nonumber\\
    \leq& \frac{L}{\mu} \left(\frac{L+\rho}{\rho} \right)^2 \|\Bar{\w}^{2m} - \Bar{\w}^{2m-1}\|^2 + \frac{LB}{\mu \rho^2}\|\nabla f(\w^{2m-1})\|^2\nonumber \\
    &+ \frac{2L}{\mu} \left(\frac{L+\rho}{\rho^2}\right)\|\Bar{\w}^{2m} - \Bar{\w}^{2m-1}\|B \|\nabla f(\w^{2m-1})\| \nonumber\\
    &+ \Eb_i \left[\left( \frac{2\mu(\tau s_i^{2m+1}-1)^2 + (2L-\mu)\mu }{2 \mu^2 (\tau s_i^{2m+1})^2}\right) (\kappa_i + G)^2\right]. \nonumber
\end{align}
Re-organize, we get
\begin{align}
    &\Eb[f(\Bar{\w}^{2m+1})] \leq f(\Bar{\w}^{2m-1}) - \left(\frac{1}{2\mu} - \frac{LB}{\mu \rho^2}\right) \|\nabla f(\w^{2m-1})\|^2 \nonumber \\
    &+ \frac{\mu}{\mu+\lambda_m} (1+\frac{2LB(L+\rho)}{\mu\rho^2})\|\Bar{\w}^{2m} - \Bar{\w}^{2m-1}\|\|\nabla f(\w^{2m-1})\| \nonumber\\
    &+ \left( \frac{\mu}{\mu+\lambda_m}\right)^2 L \left(1+\frac{(L+\rho)^2}{\mu \rho^2}\right) \|\Bar{\w}^{2m} - \Bar{\w}^{2m-1}\|^2 \nonumber\\
    & + \Eb_i \left[\left( \frac{2\mu(\tau s_i^{2m+1}-1)^2 + (2L-\mu)\mu }{2 \mu^2 (\tau s_i^{2m+1})^2}\right) (\kappa_i + G)^2\right] \nonumber\\
    & + \frac{Ld}{2} \left( \frac{h_{CJ}^{2m+1} \alpha_{CJ}^{2m+1}}{\alpha_u}\right)^2 + \frac{L}{2} \frac{\sigma_c^2d}{\alpha_u^2}. \nonumber
\end{align}
Define
\begin{align}
    &\mathbf{C}_1 = \frac{1}{2\mu} - \frac{LB}{\mu \rho^2}, \nonumber\\
    &\mathbf{C}_2^m = \frac{\mu}{\mu+\lambda_m} \left(1+\frac{2LB(L+\rho)}{\mu\rho^2} \right) qG, \nonumber\\
    &\mathbf{C}_3^m = \left( \frac{\mu}{\mu+\lambda_m}\right)^2 L \left(1+\frac{(L+\rho)^2}{\mu \rho^2}\right) q^2, \nonumber\\
    &\mathbf{C}_4^m = \Eb_i \left[\left( \frac{2\mu(\tau s_i^{2m+1}-1)^2 + (2L-\mu)\mu }{2 \mu^2 (\tau s_i^{2m+1})^2}\right) (\kappa_i + G)^2\right], \nonumber\\
    &\mathbf{C}_5^m = \frac{Ld}{2} \left( \frac{h_{CJ}^{2m+1} \alpha_{CJ}^{2m+1}}{\alpha_u}\right)^2, \nonumber\\
    &\mathbf{C}_6 = \frac{L\sigma_c^2d}{2\alpha_u^2}. \nonumber
\end{align}
Average over M odd iterations, we get
\begin{align}
    &\min_{m \in [M]} \mb{E} \| \nabla f(\Bar{\w}^{2m-1}) \|^2 \leq \frac{1}{M} \sum_{m=1}^M \Eb \|\nabla f(\Bar{\w}^{2m-1}) \|^2 \nonumber\\
    & \leq \frac{f(\Bar{\w}^0) - f(\Bar{\w}^*)}{M \mathbf{C_1}} + \frac{\mathbf{C}_6}{ \mathbf{C_1}} + \frac{1}{M \mathbf{C_1}} \sum_{m=1}^M \mathbf{C}_2^m + \mathbf{C}_3^m + \mathbf{C}_4^m + \mathbf{C}_5^m.  \nonumber
\end{align}

% \end{proof}

\vspace{-0.5in}
\bibliographystyle{IEEEtran}{}
\bibliography{refs_journal}

\end{document}